\setlist[itemize]{noitemsep, topsep=0pt}
\setlist[enumerate]{noitemsep, topsep=0pt}
\definecolor{dkgreen}{rgb}{0,0.6,0}
\definecolor{gray}{rgb}{0.5,0.5,0.5}
\definecolor{mauve}{rgb}{0.58,0,0.82}
\tiny\color{dkgreen},
\newcommand{\system}{{\sc CLX}}
\newcommand{\tde}{{\sc TDE}}
\newcommand{\lang}{{\sc UniFi}}
\newcommand{\wrangler}{{\sc Wrangler}}
\newcommand{\baseline}{{\sc FlashFill}}
\newcommand{\blinkfill}{{\sc BlinkFill}}
\newcommand{\foofah}{{\sc Foofah}}
\newcommand{\learnpads}{{\sc LearnPADS}}
\newcommand{\datamaran}{{\sc Datamaran}}
\newcommand{\trifacta}{{\sc Trifacta}}
\newcommand{\trifactawrangler}{{\sc TrifactaWrangler}}
\newcommand{\regexreplace}{{\sc RegexReplace}}
\newcommand{\Naively}{Na\"{i}vely}
\newcommand{\naive}{na\"{i}ve}
\newcommand*{\affmark}[1][*]{\textsuperscript{#1}}
\newcommand{\minisection}[1]{\vspace{0.25cm} \noindent {\bf #1} ---}
\newcommand{\informalsection}[1]{\vspace{0.15cm} \noindent {\em #1}. }
\newtheorem{theorem}{Theorem}[section]
\newtheorem{example}{Example}
\newtheorem{defn}{Definition}[section]
\newcommand{\eg}{{e.g.}}
\newcommand{\Eg}{{E.g.}}
\newcommand{\ie}{{i.e.}}
\newcommand{\ea}{{et al.}}
\newcommand{\etal}{{et al.}}
\newcounter{choice}
\renewcommand\thechoice{\Alph{choice}}
\newcommand\choicelabel{\thechoice.}
\newenvironment{choices}%
  {\list{\choicelabel}%
     {\usecounter{choice}\def\makelabel##1{\hss\llap{##1}}%
       \settowidth{\leftmargin}{W.\hskip\labelsep\hskip 2.5em}%
       \def\choice{%
         \item
       } 
       \labelwidth\leftmargin\advance\labelwidth-\labelsep
       \topsep=0pt
       \partopsep=0pt
     }%
  }%
  {\endlist}
    \def\choice{%
      \refstepcounter{choice}%
      \ifnum\value{choice}>1\relax
        \penalty -50\hskip 1em plus 1em\relax
      \fi
      \choicelabel
      \nobreak\enskip
    }
\begin{document}
\title{CLX: Towards verifiable PBE data transformation}

\numberofauthors{1} 
\author{
\alignauthor
Zhongjun Jin\affmark[1]\quad\ Michael Cafarella\affmark[1]\quad\ H. V. Jagadish\affmark[1]\quad\ Sean Kandel\affmark[2] \\
Michael Minar\affmark[2]\quad\ Joseph M. Hellerstein\affmark[2,3]\\
      \affaddr{\affmark[1]University of Michigan, Ann Arbor \quad\ \affmark[2]Trifacta Inc. \quad\ \affmark[3]UC Berkeley}\\
      \email{\{markjin,michjc,jag\}@umich.edu \quad\ \{skandel,mminar\}@trifacta.com\quad\ hellerstein@berkeley.edu}
}

\maketitle

\begin{abstract}
Effective data analytics on data collected from the real world usually begins with a notoriously expensive pre-processing step of data transformation and wrangling. Programming By Example (PBE) systems have been proposed to automatically infer transformations using simple examples that users provide as hints. However, an important usability issue---\textbf{verification}---limits the effective use of such PBE data transformation systems, since the verification process is often effort-consuming and unreliable.

We propose a data transformation paradigm design \system\ (pronounced ``clicks'') with a focus on facilitating verification for end users in a PBE-like data transformation.  \system\ performs pattern clustering in both input and output data, which allows the user to verify at the pattern level, rather than the data instance level, without having to write any regular expressions, thereby significantly reducing user verification effort. Thereafter, \system\ automatically generates transformation programs as regular-expression replace operations that are easy for average users to verify.

We experimentally compared the \system\ prototype with both \baseline, a state-of-the-art PBE data transformation tool, and \trifacta, an influential system supporting interactive data transformation. The results show improvements over the state of the art tools in saving user verification effort, without loss of efficiency or expressive power. In a user study on data sets of various sizes, when the data size grew by a factor of 30, the user verification time required by the \system\ prototype grew by $1.3\times$ whereas that required by \baseline\ grew by $11.4\times$.  In another user study assessing the users' understanding of the transformation logic --- a key ingredient in effective verification --- \system\ users achieved a success rate about twice that of \baseline\ users.
\end{abstract}


\section{Introduction}\label{sec:intro}
Data transformation, or data wrangling, is a critical pre-processing step essential to effective data analytics on real-world data and is widely known to be human-intensive as it usually requires professionals to write ad-hoc scripts that are difficult to understand and maintain. A \textit{human-in-the-loop} Programming By Example (PBE) approach has been shown to reduce the burden for the end user: in projects such as \baseline~\cite{Gulwani2011}, \blinkfill~\cite{Singh2016}, and \foofah~\cite{Jin2017}, the system synthesizes data transformation programs using simple examples the user provides. 

\minisection{Problems}
Most of existing research in PBE data transformation tools has focused on the ``system'' part --- improving the efficiency and expressivity of the program synthesis techniques. Although these systems have demonstrated some success in efficiently generating high-quality data transformation programs for real-world data sets, \textbf{verification}, as an indispensable interaction procedure in PBE, remains a major bottleneck within existing PBE data transformation system designs. The high labor cost, may deter the user from confidently using these tools. 

Any reasonable user who needs to perform data transformation should certainly care about the ``correctness'' of the inferred transformation logic. In fact, a user will typically go through rounds of ``verify-and-specify" cycles when using a PBE system. 
In each interaction, a user has to verify the correctness of the current inferred transformation logic by validating the transformed data instance by instance until she identifies a data instance mistakenly transformed; then she has to provide a new example for correction. \textbf{Given a potentially large and varied input data set, such a verification process is like ``finding a needle in a haystack'' which can be extremely time-consuming and tedious.}

A \naive\ way to simplify the cumbersome verification process is to add explanations to the transformed data so that the user does not have to read them in their raw form. For example, if we can somehow know the desired data pattern, we can write a checking function to automatically check if the post-transformed data satisfies the desired pattern, and highlight data entries that are not correctly transformed.

However, a {\em data explanation} procedure alone can not solve the entire verification issue; the undisclosed transformation logic remains untrustworthy to the end user. Users can at best verify that existing data are converted into the right form, but \textbf{the logic is not guaranteed to be correct and may function unexpectedly on new input} (see Section~\ref{sec:motivating} for an example). Without good insight into the transformation logic, PBE system users cannot tell if the inferred transformation logic is correct, or when there are errors in the logic, they may not be able to debug it. \textbf{If the user of a traditional PBE system lacks good understanding of the synthesize program's logic, she can only verify it by spending large amounts of time testing the synthesized program on ever-larger datasets}.

\Naively, previous PBE systems can support {\em program explanation} by  presenting the inferred programs to end users. However, these data transformation systems usually design their own Domain Specific Languages (DSLs), which are usually sophisticated. The steep learning curve makes it unrealistic for most users to quickly understand the actual logic behind the inferred programs. Thus, besides more explainable data, a desirable PBE system should be able to present the transformation logic in a way that most people are already familiar with.

\minisection{Insight}
Regular expressions (regexp) have been known to most programmers of various expertise and regexp replace operations have been commonly applied in data transformations. The influential data transformation system, \wrangler\ (later as \trifacta), proposes simplified natural-language-like regular expressions which can be understood and used even by non-technical data analysts. This makes regexp replace operations a good choice for an {\em explainable transformation language}. The challenge then is how to automatically synthesize regexp replace operations as the desired transformation logic in a PBE system. 

A regexp replace operation takes in two parameters: an {\em input pattern} and a {\em replacement function}. Suppose an input data set is given, and the desired data pattern can be known, the challenge is to determine a suitable input pattern and the replacement function to convert all input data into the desired pattern. Moreover, if the input data set is heterogeneous with many formats, we need to find out an unknown set of such input-pattern-and-replace-function pairs.

Pattern profiling can be used to discover clusters of data patterns within a data set that are are useful to generate regular replace operations. Moreover, it can also serve as a data explanation approach helping the user quickly understand the pre- and post-transformation data which reduces the verification challenge users face in PBE systems. 

\minisection{Proposed Solution}
In this project, we propose a new data transformation paradigm, \system, to address the two specific problems within our claimed verification issue as a whole. The \system\ paradigm has three components: two algorithmic components---\textit{clustering} and \textit{transformation}---with an intervening component of \textit{labeling}. In this paper, we present an instantiation of the \system\ paradigm. We present 
\begin{enumerate*}[(1)]
\item an efficient pattern clustering algorithm that groups data with similar structures into small clusters, 
\item a DSL for data transformation, that can be interpreted as a set of regular expression replace operations,  
\item a program synthesis algorithm to infer desirable transformation logic in the proposed DSL.
\end{enumerate*}

Through the above means, we are able to greatly ameliorate the usability issue in verification within PBE data transformation systems. 
Our experimental results show improvements over the state of the art in saving user verification effort, along with increasing users' comprehension of the inferred transformations. Increasing comprehension is highly relevant to reducing the verification effort. In one user study on a large data set, when the data size grew by a factor of 30, the \system\ prototype cost $1.3 \times$ more verification time whereas \baseline\ cost $11.4 \times$ more verification time. In a separate user study accessing the users' understanding of the transformation logic, \system\ users achieved a success rate about twice that of \baseline\ users. Other experiments also suggest that the expressive power of the \system\ prototype and its efficiency on small data are comparable to those of \baseline. 


\minisection{Organization}
After motivating our problem with an
example in Section~\ref{sec:motivating}, we discuss the following contributions:
\begin{itemize}
\itemsep0em 
 \item We define the data transformation problem and present the PBE-like CLX framework solving this problem. (Section~\ref{sec:overview})
  \item We present a data pattern profiling algorithm to hierarchically cluster the raw data based on patterns.
(Section~\ref{sec:profiling})
\item We present a new DSL for data pattern transformation in the \system\ paradigm. (Section~\ref{sec:standardizationProg})
  \item We develop algorithms synthesizing data transformation programs, which can transform any given input pattern to the desired standard pattern. (Section~\ref{sec:algorithm})
  \item We experimentally evaluate the \system\ prototype and other baseline systems through user studies and simulations. (Section~\ref{sec:evaluation})
\end{itemize}
We explore the related work in Section~\ref{sec:related} and finish with a
discussion of future work in Section~\ref{sec:conclusion}.


\section{Motivating Example}\label{sec:motivating}

\begin{figure}[t]
\centering

\begin{minipage}[b]{0.48\linewidth}
\begin{minipage}[b]{\textwidth}
\centering
\small
\begin{adjustbox}{max width=.45\textwidth}
\begin{tabular}{|l|}
\hline
(734) 645-8397 \\ \hline
(734)586-7252 \\ \hline
734-422-8073 \\ \hline
734.236.3466 \\ \hline
...
\end{tabular}
\end{adjustbox}
\caption{Phone numbers with diverse formats}
\label{tab:rawData}
\end{minipage}

\begin{minipage}[b]{\textwidth}
\centering
\small
 \begin{adjustbox}{max width=\textwidth}
\begin{tabular}[t]{|l|}
\hline
\verb|\({digit}3\)\ {digit}3\-{digit}4| \\
\textbf{(734) 645-8397} ... \textit{(10000 rows)}\\ \hline
\end{tabular}
\end{adjustbox}
\caption{Patterns after transformation}\label{tab:posttransform}
\end{minipage}
\end{minipage}\hfill
\begin{minipage}[b]{0.5\linewidth}
\centering
\small
 \begin{adjustbox}{max width=\textwidth}
\begin{tabular}[t]{|l|}
\hline
\verb|\({digit}3\){digit}3\-{digit}4| \\
\textbf{(734)586-7252} ... \textit{(2572 rows)}\\ \hline
\verb|{digit}3\-{digit}3\-{digit}4| \\
\textbf{734-422-8073} ... \textit{(3749 rows)}\\ \hline
\verb|\({digit}3\)\ {digit}3\-{digit}4| \\
\textbf{(734) 645-8397} ... \textit{(1436 rows)}\\ \hline
\verb|{digit}3\.{digit}3\.{digit}4| \\
\textbf{734.236.3466} ... \textit{(631 rows)}\\ \hline
...
\end{tabular}
\end{adjustbox}
\vspace{0.3cm}
\caption{Pattern clusters of raw data}
\label{fig:unifacta}

\end{minipage}

\begin{minipage}{\linewidth}
\centering

\lstset{
basicstyle=\small\tt,
keywordstyle=\color{dkgreen},
numberstyle=\small}
\begin{lstlisting}
Replace '/^\(({digit}{3})\)({digit}{3})\-({digit}{4})$/' in column1 with '($1) $2-$3'
Replace '/^({digit}{3})\-({digit}{3})\-({digit}{4})$/' in column1 with '($1) $2-$3'
...
\end{lstlisting}
\caption{Suggested data transformation operations}
\label{fig:unifactaScript}
\end{minipage}
\end{figure}
Bob is a technical support employee at the customer service department. He wanted to have a set of 10,000 phone numbers in various formats (as in Figure~\ref{tab:rawData}) in a unified format of ``(xxx) xxx-xxxx''. Given the volume and the heterogeneity of the data, neither manually fixing them or hard-coding a transformation script was convenient for Bob. He decided to see if there was an automated solution to this problem.

Bob found that Excel 2013 had a new feature named \baseline\ that could transform data patterns. He loaded the data set into Excel and performed \baseline\ on them.

\begin{example}
Initially, Bob thought using \baseline\ would be straightforward: he would simply need to provide an example of the transformed form of each ill-formatted data entry in the input and copy the exact value of each data entry already in the correct format. However, in practice, it turned out not to be so easy. First, Bob needed to carefully check each phone number entry deciding whether it is ill-formatted or not. After obtaining a new input-output example pair, \baseline\ would update the transformation results for the entire input data, and Bob had to carefully examine again if any of the transformation results were incorrect. This was tedious given the large volume of heterogeneous data \textbf{(verification at string level is challenging)}. After rounds of repairing and verifying, Bob was finally sure that \baseline\ successfully transformed all existing phone numbers in the data set, and he thought the transformation inferred by \baseline\ was impeccable. Yet, when he used it to transform another data set, a phone number ``+1 724-285-5210'' was mistakenly transformed as ``(1) 724-285'', which suggested that the transformation logic may fail anytime \textbf{(unexplainable transformation logic functions unexpectedly)}. Customer phone numbers were critical information for Bob's company and it was important not to damage them during the transformation. With little insight from \baseline\ regarding the transformation program generated, Bob was not sure if the transformation was reliable and had to do more testing \textbf{(lack of understanding increases verification effort)}.
\end{example}

Bob heard about \system\ and decided to give it a try.
\begin{example}
He loaded his data into \system\ and it immediately presented a list of distinct string patterns for phone numbers in the input data (Figure~\ref{fig:unifacta}), which helped Bob quickly tell which part of the data were ill-formatted. After Bob selected the desired pattern, \system\ immediately transformed all the data and showed a new list of string patterns as Figure~\ref{tab:posttransform}. \textbf{So far, verifying the transformation result was straightforward.} The inferred program is presented as a set of \textsf{Replace} operations on raw patterns in Figure~\ref{fig:unifacta}, each with a picture visualizing the transformation effect. Bob was not a regular expressions guru, but these operations seemed simple to understand and verify. Like many users in our User Study (Section~\ref{subsec:explanationEval}), \textbf{Bob had a deeper understanding of the inferred transformation logic with \system\ than with \baseline,  and hence, he knew well when and how the program may fail, which saved him from the effort of more blind testing.}
\end{example}




\section{Overview}\label{sec:overview}
\begin{table}[t]
\centering
\fontsize{8}{8}\selectfont
\setlength{\tabcolsep}{2pt}
\begin{tabularx}{\linewidth}{lX}
\toprule
Notation&Description   \\ \midrule
$\mathcal{S} = \{\mathit{s}_1, \mathit{s}_2, \dots\}$ & A set of ad hoc strings $s_1, s_2, \dots$ to be transformed. \\
$\mathcal{P} = \{\mathit{p}_1, \mathit{p}_2, \dots\}$& A set of string patterns derived from $\mathcal{S}$. \\

$\mathit{p}_i = \{\mathit{t}_1, \mathit{t}_2, \dots\}$ &Pattern made from a sequence of tokens $t_i$ \\
 
$\mathcal{T}$ & The desired target pattern that all strings in $\mathcal{S}$ needed to be transformed into. \\
$\mathcal{L} = \{(\mathit{p}_1, \mathit{f}_1), (\mathit{p}_2, \mathit{f}_2), \dots\}$ & Program synthesized in \system\ transforming data the patterns of $\mathcal{P}$ into $\mathcal{T}$.\\
$\mathcal{E}$ & The expression $\mathcal{E}$ in $\mathcal{L}$, which is a concatenation of \textsf{Extract} and/or \textsf{ConstStr} operations. It is a transformation plan for a source pattern. We also refer to it as an \textit{Atomic Transformation Plan} in the paper.\\
$\mathcal{Q}(\mathfrak{\widetilde{t}},p)$ & Frequency of token $\mathfrak{\widetilde{t}}$ in pattern $p$\\
$\mathcal{G}$ & Potential expressions represented in Directed Acyclic Graph.\\
\bottomrule
\end{tabularx}
\caption{Frequently used notations}\label{tab:notation}
\end{table}

\subsection{Patterns and Data Transformation Problem}
A data pattern, or string pattern, is a ``high-level'' description of the attribute value’s string. A natural way to describe a pattern could be a regular expression over the characters that constitute the string. In data transformation, we find that groups of contiguous characters are often transformed together as a group. Further, these groups of characters are meaningful in themselves.  For example, in a date string ``11/02/2017'', it is useful to cluster ``2017'' into a single group, because these four digits are likely to be manipulated together. We call such meaningful groups of characters as {\em tokens}.

Table~\ref{tab:tokenDescription} presents all {\em token classes} we currently support in our instantiation of \system, including their class names, regular expressions, and notation. In addition, we also support tokens of constant values (\eg, ``,'', ``:''). In the rest of the paper, we represent and handle these tokens of constant values differently from the 5 token classes defined in Table~\ref{tab:tokenDescription}. For convenience of presentation, we denote such tokens with constant values as \textbf{\textit{literal tokens}} and tokens of 5 token classes defined in Table~\ref{tab:tokenDescription} as \textbf{\textit{base tokens}}. 

A pattern is written as a sequence of tokens, each followed by a quantifier indicating the number of occurrences of the preceding token. A quantifier is either a single natural number or ``$+$'', indicating that the token appears at least once. In the rest of the paper, to be succinct, a token will be denoted as ``$\langle\mathfrak{\widetilde{t}}\rangle \mathfrak{q}$'' if $\mathfrak{q}$ is a number (\eg, $\langle D\rangle$3) or ``$\langle\mathfrak{\widetilde{t}}\rangle+$'' otherwise (\eg, $\langle D\rangle$+). If $\mathfrak{\widetilde{t}}$ is a literal token, it will be surrounded by a single quotation mark, like `:'.  When a pattern is shown to the end user, it is presented as a {\em natural-language-like regular expression} proposed by \wrangler~\cite{Kandel2011} (see regexps in  Fig~\ref{fig:unifactaScript}).

\begin{table}[t]
\centering
\fontsize{8}{8}\selectfont
\begin{tabularx}{\linewidth}{llll}
\toprule
Token Class & Regular Expression & Example & Notation\\ 
\midrule
digit & \texttt{[0-9]} & ``12'' & $\langle D \rangle$ \\
lower & \texttt{[a-z]} & ``car'' & $\langle L\rangle$ \\
upper & \texttt{[A-Z]} & ``IBM'' & $\langle U\rangle$ \\
alpha & \texttt{[a-zA-Z]} & ``Excel'' & $\langle A\rangle$ \\
alpha-numeric &\texttt{[a-zA-Z0-9\_-]} & ``Excel2013'' & $\langle AN \rangle$ \\
\bottomrule
\end{tabularx}
\caption{Token classes and their descriptions}
\label{tab:tokenDescription}
\end{table}

With the above definition of data patterns, we hereby formally define the problem we tackle using the \system\ framework---data transformation. Data transformation or wrangling is a broad concept. Our focus in this paper is to apply the \system\ paradigm to transform a data set of heterogeneous patterns into a desired pattern. A formal definition of the problem is as follows:
\begin{defn}[Data (Pattern) Transformation]\label{defn:normalization}
Given a set of strings $\mathcal{S}=\{\mathit{s}_1, \dots, \mathit{s}_n \}$, generate a program $\mathcal{L}$ that transforms each string in $\mathcal{S}$ to an equivalent string matching the user-specified desired target pattern $\mathcal{T}$.
\end{defn}

$\mathcal{L} = \{(\mathit{p}_1, \mathit{f}_1), (\mathit{p}_2, \mathit{f}_2), \dots\}$ is the program we synthesize in the transforming phase of \system. It is represented as a set regexp replace operations, \textsf{Replace}$(\mathit{p}, \mathit{f})$\footnote{$\mathit{p}$ is the regular expression, and $\mathit{f}$ is the replacement string indicating the operation on the string matching the pattern $\mathit{p}$.}, that many people are familiar with (\eg, Fig~\ref{fig:unifactaScript}). 

With above definitions of patterns and data transformations, we present the \system\ framework for data transformation.

\subsection{\system\ Data Transformation Paradigm}
We propose a data transformation paradigm called Cluster-Label-Transform (\system, pronounced ``clicks''). Figure~\ref{fig:interaction} visualizes the interaction model in this framework.  

\minisection{Clustering}
The clustering component groups the raw input data into clusters based on their data patterns/formats. Compared to raw strings, data patterns is a more abstract representation. The number of patterns is fewer than raw strings, and hence, it can make the user understand the data and verify the transformation more quickly. Patterns discovered during clustering is also useful information for the downstream program synthesis algorithm to determine the number of regexp replace operations, as well as the desirable input patterns and transformation functions. 

\minisection{Labeling}
Labeling is to specify the desired data pattern that every data instance is supposed to be transformed into.
Presumably, labeling can be achieved by having the user choose among the set of patterns we derive in the clustering process assuming some of the raw data already exist in the desired format. If no input data matches the target pattern, the user could alternatively choose to manually specify the target data form.

\minisection{Transforming}
After the desired data pattern is labeled, the system automatically synthesizes data transformation logic that transforms all undesired data into the desired form and also proactively helps the user understand the transformation logic. 

\begin{figure}[t]
\centering
\includegraphics[width=.9\linewidth]{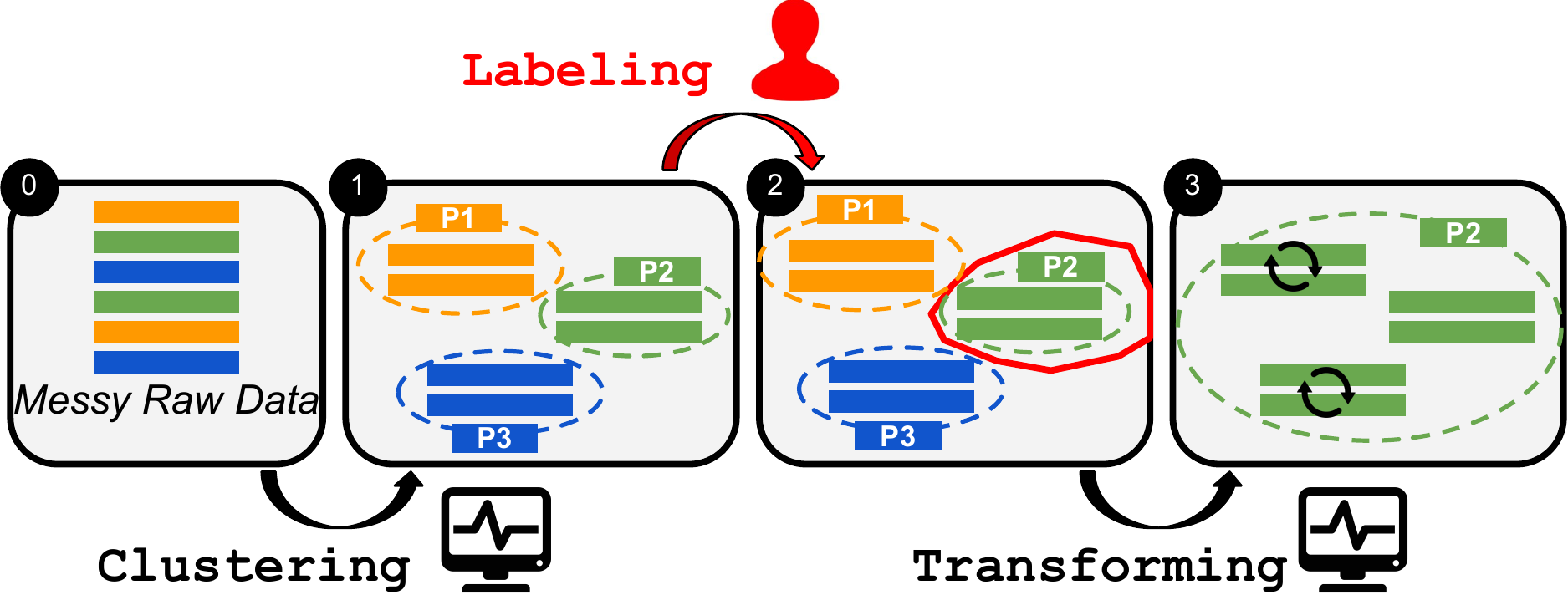}
\caption{``CLX'' Model: Cluster--Label--Transform}
\label{fig:interaction}
\end{figure}

In this paper, we present an instantiation of the \system\ paradigm for \textit{data pattern transformation}. Details about the clustering component and the transformation component are discussed in Section~\ref{sec:profiling} and \ref{sec:algorithm}. In Section~\ref{sec:standardizationProg}, we show the domain-specific-language (DSL) we use to represent the program $\mathcal{L}$ as the outcome of program synthesis, which can be then presented as the regexp replace operations. 
The paradigm has been designed to allow new algorithms and DSLs for transformation problems other than data pattern transformation; we will pursue other instantiations in future work.

\section{Clustering data on patterns}\label{sec:profiling}
In \system, we first cluster data into meaningful groups based on their structure and obtain the pattern information, which helps the user quickly understand the data. To minimize user effort, this clustering process should ideally not require user intervention. 

\learnpads~\cite{Fisher2008} is an influential project that also targets string pattern discovery. However, \learnpads\ is orthogonal to our effort in that their goal is mainly to find a comprehensive and unified description for the entire data set whereas we seek to partition the data into clusters, each cluster with a single data pattern.
Also, the PADS language~\cite{Fisher2005} itself is known to be hard for a non-expert to read~\cite{zhu2012learnpads}. Our interest is to derive simple patterns that are comprehensible. Besides the explainability, efficiency is another important aspect of the clustering algorithm we must consider, because input data can be huge and the real-time clustering must be interactive.


To that end, we propose an automated means to hierarchically cluster data based on data patterns given a set of strings. The data is clustered through a two-phase profiling: 
\begin{enumerate*}[(1)]
\item tokenization: tokenize the given set of strings of ad hoc data and cluster based on these initial patterns, 
\item agglomerative refinement: recursively merge pattern clusters to formulate a \textbf{\textit{pattern cluster hierarchy}} that allows the end user to view/understand the pattern structure information in a simpler and more systematic way, and also helps \system\ generate a simple transformation program.
\end{enumerate*}

\subsection{Initial Clustering 
Through Tokenization}
Tokenization is a common process in string processing when string data needs to be manipulated in chunks larger than single characters. A simple parser can  do the job.

Below are the rules we follow in the tokenization phase.
\begin{itemize}
\itemsep0em 
\item Non-alphanumeric characters carry important hints about the string structure. Each such character is identified as an individual literal token. 
\item We always choose the most precise base type to describe a token. For example, a token with string content ``cat'' can be categorized as ``lower'', ``alphabet'' or ``alphanumeric'' tokens. We choose ``lower'' as the token type for this token.
\item The quantifiers are always natural numbers.
\end{itemize}
Here is an example of the token description of a string data record discovered in tokenization phase.
\begin{example}\label{ex:token}
Suppose the string ``Bob123@gmail.com'' is to be tokenized. The result of tokenization becomes [$\langle U\rangle$, $\langle L\rangle 2$, $\langle D\rangle 3$, `@', $\langle L \rangle 5$, `.', $\langle L \rangle 3$].
\end{example}

After tokenization, each string corresponds to a data pattern composed of tokens. We create the initial set of pattern clusters by clustering the strings sharing the same patterns. Each cluster uses its pattern as a label which will later be used for refinement, transformation, and user understanding. 

\minisection{Find Constant Tokens} Some of the tokens in the discovered patterns have constant values.  Discovering such constant values and representing them using the actual values rather than base tokens helps improve the quality of the program synthesized. For example, if most entities in a faculty name list contain ``Dr.'', it is better to represent a pattern as [`Dr.',`$\backslash$ ', `$\langle U \rangle$', `$\langle L \rangle$+'] than [`$\langle U \rangle$', `$\langle L \rangle$', `.', `$\backslash$ ', `$\langle U \rangle$', `$\langle L \rangle$+']. Similar to \cite{Fisher2008}, we find tokens with constant values using the statistics over tokenized strings in the data set.


\begin{algorithm}[t]
\SetKwData{strategy}{$\tilde{g}$}
\SetKwData{pattern}{$\mathcal{P}$}
\SetKwData{parent}{$p_{parent}$}
\SetKwData{cand}{$\mathcal{P}_{raw}$}
\SetKwData{count}{$\mathcal{C}_{raw}$}
\SetKwData{return}{$\mathcal{P}_{final}$}

 \KwData{Pattern set \pattern, generalization strategy \strategy}
\KwResult{Set of more generic patterns \return}
\return,\cand $\leftarrow \emptyset$\;
\count $\leftarrow \{\}$\;
\For{ $p_i \in$ \pattern}{\label{algo:refine:1:start}
\parent\ $\leftarrow\ getParent(p_i, \strategy)$\;
add \parent\ to \cand\;\label{algo:refine:1:mid}
\count[\parent] = \count[\parent] + 1  \;\label{algo:refine:1:end}
}
\For{ \parent\ $\in$ \cand\ ranked by \count\ from high to low}{\label{algo:refine:2:start}
 $\parent.child  \leftarrow \{ p_j | \forall p_j \in \pattern, p_{j}.isChild(\parent)\}$\; 
 add \parent\ to \return\;\label{algo:refine:2:end}
 remove $\parent.child$ from \pattern\;\label{algo:refine:5:end}
}
\textbf{Return} \return\;\label{algo:refine:3}

 \caption{Refine Pattern Representations}
 \label{algo:refine}
 
\end{algorithm}
\subsection{Agglomerative Pattern Cluster Refinement} 
In the initial clustering step, we distinguish different patterns by token classes, token positions, and quantifiers, the actual number of pattern clusters discovered in the ad hoc data in tokenization phase could be huge. User comprehension is inversely related to the number of patterns. It is not very helpful to present too many very specific pattern clusters all at once to the user. Plus, it can be unacceptably expensive to develop data pattern transformation programs separately for each pattern. 

To mitigate the problem, we build {\em pattern cluster hierarchy}, \ie, a hierarchical pattern cluster representation with the leaf nodes being the patterns discovered through tokenization, and every internal node being a {\em parent pattern}. With this hierarchical pattern description, the user can understand the pattern information at a high level without being overwhelmed by many details, and the system can generate simpler programs. Plus, we do not lose any pattern discovered previously.

From bottom-up, we recursively cluster the patterns at each level to obtain {\em parent patterns}, \ie, more generic patterns, formulating the new layer in the hierarchy. To build a new layer, Algorithm~\ref{algo:refine} takes in different generalization strategy $\tilde{g}$ and the child pattern set $\mathcal{P}$ from the last layer. 
Line~\ref{algo:refine:1:start}-\ref{algo:refine:1:mid} clusters the current set of pattern clusters to get parent pattern clusters using the generalization strategy $\widetilde{g}$. The generated set of parent patterns may be identical to others or might have overlapping expressive power. Keeping all these parent patterns in the same layer of the cluster hierarchy is unnecessary and increases the complexity of the hierarchy generated. Therefore, we only keep a small subset of the parent patterns initially discovered and make sure they together can cover any child pattern in $\mathcal{P}$.
To do so, we use a counter $\mathcal{C}_{raw}$ counting the frequencies of the obtained parent patterns (line~\ref{algo:refine:1:end}). Then, we iteratively add the  parent pattern that covers the most patterns in $\mathcal{P}$ into the set of more generic patterns to be returned (line~\ref{algo:refine:2:start}-\ref{algo:refine:5:end}). The returned set covers all patterns in $\mathcal{P}$ (line~\ref{algo:refine:3}). Overall, the complexity is $\mathcal{O}(n \log n)$, where $n$ is the number of patterns in P, and hence, the algorithm itself can quickly converge.

In this paper, we perform three rounds of refinement to construct the new layer in the hierarchy, each with a particular generalization strategy:
\begin{enumerate}
\itemsep0em 
\item natural number quantifier to `+'
\item $\langle L \rangle$, $\langle U \rangle$ tokens to $\langle A \rangle$
\item $\langle A \rangle$, $\langle N \rangle$, '-', `\_' tokens to $\langle AN \rangle$
\end{enumerate}

\begin{figure}[t]
\centering
\includegraphics[width=.8\linewidth]{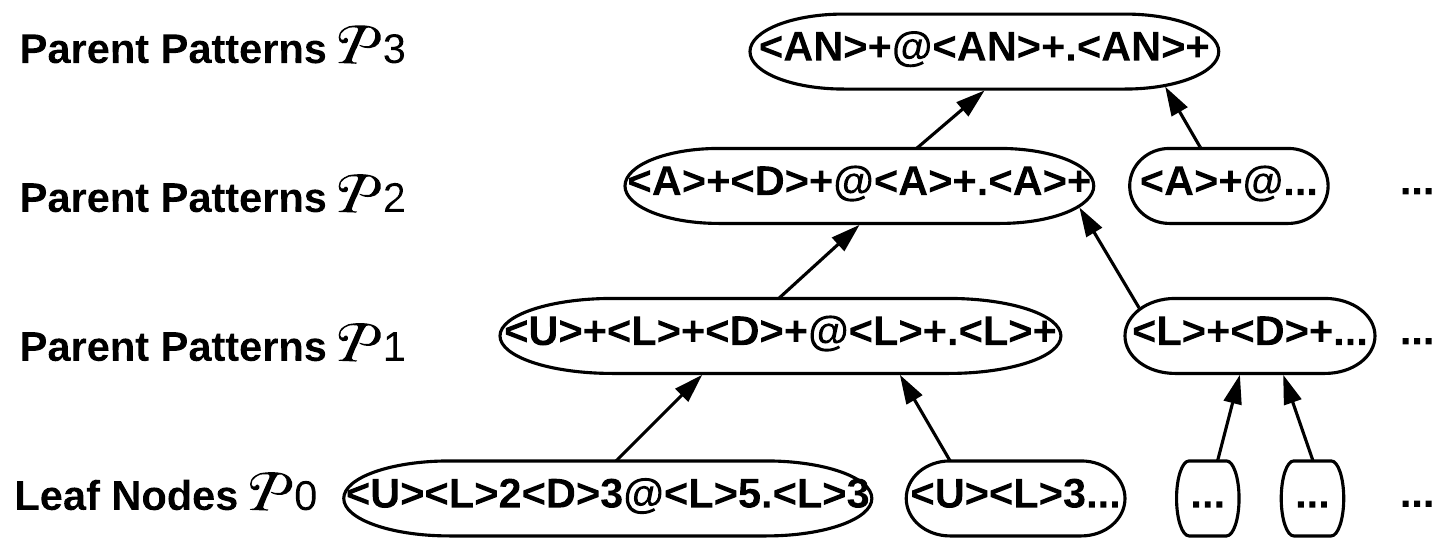}
\caption{Hierarchical clusters of data patterns}
\label{fig:pattern-tree}
\end{figure}

\begin{example}\label{ex:token2}
Given the pattern we obtained in Example~\ref{ex:token}, we successively apply Algorithm~\ref{algo:refine} with Strategy 1, 2 and 3 to generalize parent patterns $\mathcal{P}_1$, $\mathcal{P}_2$ and $\mathcal{P}_3$ and construct the pattern cluster hierarchy as in Figure~\ref{fig:pattern-tree}.
\end{example}

\subsection{Limitations}
The pattern hierarchy constructed can succinctly profile the pattern information for many data. However, the technique itself may be weak in two situations. First, as the scope of this paper is limited to addressing the syntactic transformation problem (Section~\ref{sec:standardizationProg}), the pattern discovery process we propose only considers syntactic features, but no semantic features. This may introduce the issue of ``misclustering''. For example, a date of format ``MM/DD/YYYY'' and a date of format ``DD/MM/YYYY'' may be grouped into the same cluster of ``$\langle N \rangle 2 / \langle N \rangle 2 / \langle N \rangle 4$'', and hence, transforming from the former format into the latter format is impossible in our case. Addressing this problem requires the support for semantic information discovery and transformation, which will be in our future work. Another possible weakness of ``fail to cluster'' is also mainly affected by the semantics issue: we may fail to cluster semantically-same but very messy data. \Eg, we may not cluster the local-part (everything before `@') of a very weird email address ``Mike\textquotesingle John.Smith@gmail.com'' (token $\langle AN \rangle$ cannot capture `\textquotesingle' or `.'). Yet, this issue can be easily resolved by adding additional regexp-based token classes (\eg, emails). Adding more token classes is beyond the interest of our work.







\section{Data Pattern Transformation Program }\label{sec:standardizationProg}

As motivated in Section~\ref{sec:intro} and Section~\ref{sec:overview}, our proposed data transformation framework is to synthesize a set of regexp replace operations that people are familiar with as the desired transformation logic. However, representing the logic as regexp strings will make the program synthesis difficult. Instead, to simplify the program synthesis, we propose a new language, \lang, as a representation of the transformation logic internal to \system.  The grammar of \lang\ is shown in Figure~\ref{fig:program}. We then discuss how to explain a inferred \lang\ program as regexp replace operations.

The top-level of any \lang\ program is a \textsf{Switch} statement that conditionally maps strings to a transformation. \textsf{Match} checks whether a string $s$ is an \textit{exact match} of a certain pattern  $p$ we discover previously. Once a string matches this pattern, it will be processed by an {\em atomic transformation plan} (expression $\mathcal{E}$ in \lang) defined below.

\begin{defn}[Atomic Transformation Plan]\label{def:atp}
Atomic transformation plan is a sequence of parameterized string operators that converts a given source pattern into the target pattern.
\end{defn}

The available string operators include \textsf{ConstStr} and \textsf{Extract}. \textsf{ConstStr($\widetilde{s}$)} denotes a constant string $\widetilde{s}$. \textsf{Extract}($\mathfrak{\widetilde{t}}_i$,$\mathfrak{\widetilde{t}}_j$) extracts from the $i^{\text{th}}$ token to the $j^{\text{th}}$ token in a pattern. In the rest of the paper, we express an \textsf{Extract} operation as \textsf{Extract}($i$,$j$), or \textsf{Extract}($i$) if $i=j$. A token $\mathfrak{t}$ is represented as $(\mathfrak{\widetilde{t}}, \mathfrak{r}, \mathfrak{q}, i)$: $\mathfrak{\widetilde{t}}$ is the token class in Table~\ref{tab:tokenDescription}; $\mathfrak{r}$ represents the corresponding regular expression of this token; $\mathfrak{q}$ is the quantifier of the token expression; $i$ denotes the index (one-based) of this token in the source pattern. 

\begin{figure}[t]
\centering
\begin{equation*} \label{eq1}
\begin{split}
\text{Program } \mathcal{L} :=&\ \textsf{Switch}((b_1, \mathcal{E}_1), \dots, (b_n, \mathcal{E}_n)) \\
\text{Predicate } b :=&\ \textsf{Match}(s, p)\\
\text{Expression } \mathcal{E} :=&\ \textsf{Concat}(f_1, \dots, f_n)\\
\text{String Expression } f :=&\ \textsf{ConstStr}(\widetilde{s})\ |\ \textsf{Extract}(\mathfrak{\widetilde{t}}_i,\mathfrak{\widetilde{t}}_j)\\
 \text{Token Expression } \mathfrak{t}_{i} :=&\ (\mathfrak{\widetilde{t}}, \mathfrak{r}, \mathfrak{q}, i)\\
\end{split}
\end{equation*}
\caption{\lang\ Language Definition}
\label{fig:program}
\end{figure}


As with \baseline~\cite{Gulwani2011} and \blinkfill~\cite{Singh2016}, we only focus on syntactic transformation, where strings are manipulated as a sequence of characters and no external knowledge is accessible, in this instantiation design. {\em Semantic transformation} (\eg, converting ``March'' to ``03'') is a subject for future work. Further--again like \blinkfill~\cite{Singh2016}--our proposed data pattern transformation language \lang\ does not support loops. Without the support for loops, \lang\ may not be able to describe transformations on an unknown number of occurrences of a given pattern structure.

We use the following two examples used by \baseline~\cite{Gulwani2011} and \blinkfill~\cite{Singh2016} to briefly demonstrate the expressive power of \lang, and the more detailed expressive power of \lang\ would be examined in the experiments in Section~\ref{subsec:expressiveness}. For simplicity, \textsf{Match}($s,p$) is shortened as \textsf{Match}($p$) as the input string $s$ is fixed for a given task.



\begin{example}\label{ex:missingRightBracket}
This problem is modified from test case ``\textbf{Example 3}'' in \blinkfill. The goal is to transform all messy values in the medical billing
codes into the correct form ``[CPT-XXXX]'' as in Table~\ref{tab:medical}.

\begin{table}[h]
\centering
\fontsize{8}{8}\selectfont
\begin{tabular}[t]{|l|l|}
\hline
\textbf{Raw data} & \textbf{Transformed data} \\ \hline
CPT-00350 & [CPT-00350]  \\ \hline
[CPT-00340 & [CPT-00340] \\ \hline
[CPT-11536] & [CPT-11536] \\ \hline
CPT115 & [CPT-115]\\ \hline
\end{tabular}
\caption{Normalizing messy medical billing codes}
\label{tab:medical}
\end{table}
The \lang\ program for this standardization task is
\begin{Verbatim}[fontsize=\small, fontfamily=courier]
Switch((Match("\[<U>+\-<D>+"),
     (Concat(Extract(1,4),ConstStr(']')))),
   (Match("<U>+\-<D>+"),
     (Concat(ConstStr('['),Extract(1,3), 
       ConstStr(']'))))
   (Match("<U>+<D>+"),
     (Concat(ConstStr('['),Extract(1),
     ConstStr('-'),Extract(2),ConstStr(']')))))
\end{Verbatim}
\end{example}

\begin{example}
This problem is borrowed from ``\textbf{Example 9}'' in \baseline. The goal is to transform all names into a unified format as in Table~\ref{tab:employee}.

\begin{table}[h]
\centering
\fontsize{8}{8}\selectfont
\begin{tabular}[t]{|l|l|}
\hline
\textbf{Raw data} & \textbf{Transformed data} \\ \hline
Dr. Eran Yahav & Yahav, E.  \\ \hline
Fisher, K. & Fisher, K. \\ \hline
Bill Gates, Sr. & Gates, B. \\ \hline
Oege de Moor & Moor, O.\\ \hline
\end{tabular}
\caption{Normalizing messy employee names}
\label{tab:employee}
\end{table}
A \lang\ program for this task is 
\begin{Verbatim}[fontsize=\small, fontfamily=courier]
Switch((Match("<U><L>+\.\ <U><L>+\ <U><L>+"),
    Concat(Extract(8,9),ConstStr(','), 
        ConstStr(' '),Extract(5))),
   (Match("<U><L>+\ <U><L>+\,\ <U><L>+\."),
    Concat(Extract(4,5),ConstStr(','), 
        ConstStr(' '),Extract(1))),
   (Match("<U><L>+\ <U>+\ <U><L>+"),
    Concat(Extract(6,7),ConstStr(','),
        ConstStr(' '),Extract(1))))
\end{Verbatim}

\end{example}

\minisection{Program Explanation}~\label{sub:explanation}
Given a \lang\ program $\textsf{L}$, we want to present it as a set of regexp replace operations, \textsf{Replace}, parameterized by {\em natural-language-like regexps} used by Wrangler~\cite{Kandel2011} (\eg,  Figure~\ref{fig:unifactaScript}), which are straightforward to even non-expert users. Each component of $(b, \mathcal{E})$, within the $\textsf{Switch}$ statement of $\textsf{L}$, will be explained as a \textsf{Replace} operation. The replacement string $f$ in the \textsf{Replace} operation is created from $p$ and the transformation plan $\mathcal{E}$ for the condition $b$. In $f$, a \textsf{ConstStr}$(\tilde{s})$ operation will remain as $\tilde{s}$, whereas a \textsf{Extract}$(\tilde{t_i}, \tilde{t_j})$ operation will be interpreted as $\$\tilde{t_i}\dots\$\tilde{t_j}$. The pattern $p$ in the predicate $b = \textsf{Match}(s,p)$ in \lang\ naturally becomes the regular expression $p$ in \textsf{Replace} with each tokens to be extracted surrounded by a pair of parentheses indicating that it can be extracted. Note that if multiple consecutive tokens are extracted in $p$, we merge them as one component to be extracted in $p$ and change the $f$ accordingly for convenience of presentation. Figure~\ref{fig:unifactaScript} is an example of the transformation logic finally shown to the user.

In fact, these \textsf{Replace} operations can be further explained using visualization techniques. For example, we could add a \textit{Preview Table} (\eg, Figure~\ref{fig:preview}) to visualize the transformation effect in our prototype in a sample of the input data. The user study in Section~\ref{subsec:explanationEval} demonstrates that our effort of outputting an explainable transformation program helps the user understand the transformation logic generated by the system. 
\section{Program Synthesis}
\label{sec:algorithm}

We now discuss how to find the desired transformation logic as a \lang\ program using the pattern cluster hierarchy obtained. Algorithm~\ref{algo:progSyn} shows our synthesis framework. 

Given a pattern hierarchy, we do not need to create an \textit{atomic transformation plan} (Definition~\ref{def:atp}) for every pattern cluster in the hierarchy.  We traverse the pattern cluster hierarchy top-down to find valid \textit{candidate source patterns} (line~\ref{algo:progSyn:1:validsource}, see Section~\ref{subsubsec:cand}). Once a source candidate is identified, we discover all {\em token matches} between this source pattern in $Q_{solved}$ and the target pattern (line~\ref{algo:progSyn:1}, see Section~\ref{subsubsec:tokenAlign}). With the generated token match information, we synthesize the data pattern normalization program including an atomic transformation plan for every source pattern (line~\ref{algo:progSyn:2}, see Section~\ref{subsec:rank}). 

\subsection{Identify Source Candidates}\label{subsubsec:cand}
\begin{figure*}[!t]
\centering
\begin{minipage}{0.20\textwidth}
\centering
\vspace{0.17cm}
\includegraphics[width=.95\linewidth]{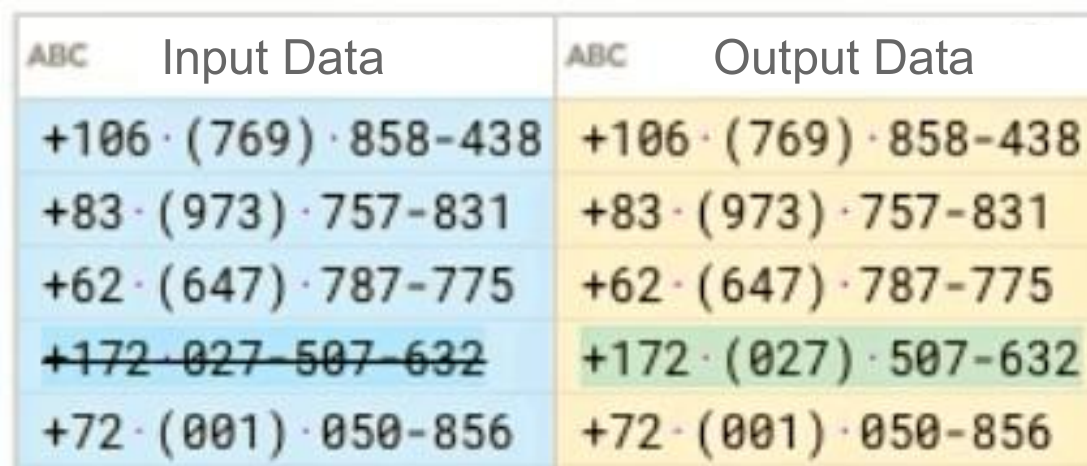}\vspace{0.6cm}
\caption{Preview Tab}\label{fig:preview}
\end{minipage}~
\begin{minipage}{0.55\textwidth}
\includegraphics[width=\linewidth]{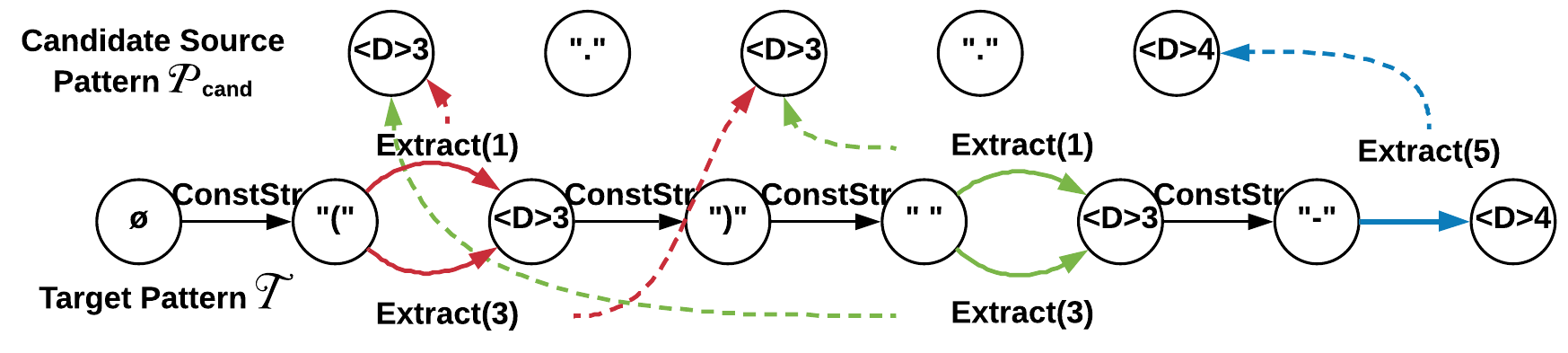}
\caption{Token alignment for the target pattern $\mathcal{T}$}
\label{fig:tokenmapping}
\end{minipage}~
\begin{minipage}{0.25\textwidth}
\centering
\includegraphics[width=\linewidth]{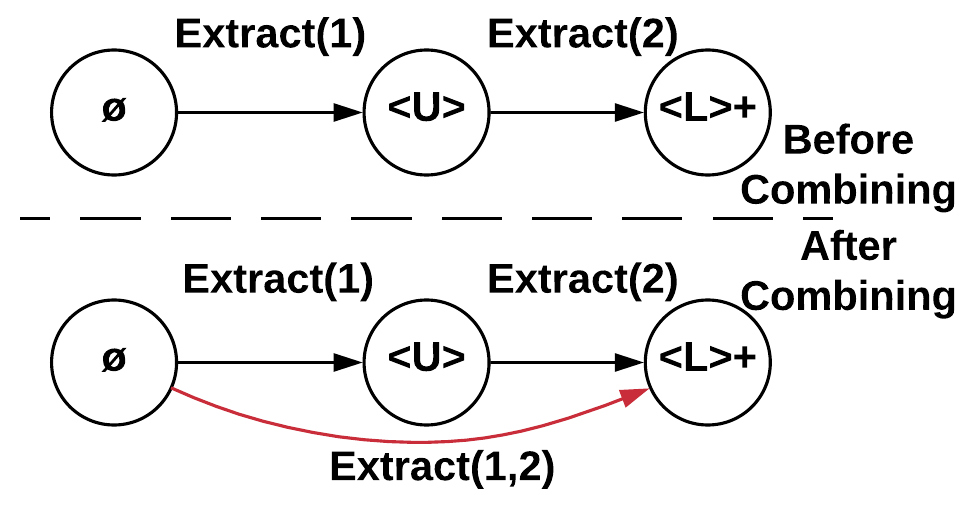}
\caption{Combine \textsf{Extracts}}\label{fig:combine}
\end{minipage}
\end{figure*}

Before synthesizing a transformation for a source pattern, we want to quickly check whether it can be a {\em candidate source pattern} (or source candidate), \ie, it is possible to find a transformation from this pattern into the target pattern, through $\texttt{validate}$. \textbf{If we can immediately disqualify some patterns, we do not need to go through more expensive data transformation synthesis process for them.}
There are a few reasons why some pattern in the hierarchy may not be qualified as a candidate source pattern:
\begin{enumerate}
\item The input data set may be ad hoc and a pattern in this data set can be a description of noise values. For example, a data set of phone numbers may contain ``N/A'' as a data record because the customer refused to reveal this information. In this case, it is meaningless to generate transformations.
\item We may be fundamentally not able to support some transformations (\eg, semantic transformations are not supported as in our case). Hence, we should filter out certain patterns which we think semantic transformation is unavoidable, because it is impossible to transform them into the desired pattern without the help from the user.
\item Some patterns are too general; it can be hard to determine how to transform these patterns into the target pattern. We can ignore them and create transformation plans for their children. For instance, if a pattern is ``$\langle AN \rangle$+,$\langle AN \rangle$+'', it is hard to tell if or how it could be transformed into the desired pattern of ``$\langle U \rangle \langle L \rangle+:\langle D \rangle+$''. By comparison, its child pattern  ``$\langle U \rangle\langle L \rangle+,\langle D \rangle+$'' seems to be a better fit as the candidate source. 
\end{enumerate}


\begin{algorithm}[t]
\SetKwData{pattern}{$\mathcal{P}$}
\SetKwData{target}{$\mathcal{T}$}
\SetKwData{map}{$\mathcal{G}$}
\SetKwData{unsolved}{$\mathcal{Q}_{unsolved}$}
\SetKwData{solved}{$\mathcal{Q}_{solved}$}
\SetKwData{root}{$\mathcal{P}_R$}
\SetKwData{return}{$\mathcal{L}$}

 \KwData{Pattern cluster hierarchy root \root, target pattern \target}
\KwResult{Synthesized program \return}
\unsolved, \solved $\leftarrow [\ ]$ \;
\return $\leftarrow \emptyset$\;
push \root to \unsolved\;
\While{$\unsolved \neq \emptyset$}{\label{algo:progSyn:1:start}
 $p \leftarrow$ pop \unsolved\;

\eIf{ $\texttt{validate}(p, \target) = \top$}{\label{algo:progSyn:1:validsource}
$\map \leftarrow \texttt{findTokenAlignment}(p,\target)$\;  \label{algo:progSyn:1}
push $\{p, \map\}$ to \solved \;
}{
push $p$.children to \unsolved\;\label{algo:progSyn:1:end}
}
}
\return $\leftarrow \texttt{createProgs}(\solved)$\;\label{algo:progSyn:2}
\textbf{Return} \return

 \caption{Synthesize \lang\ Program}
 \label{algo:progSyn}
 
\end{algorithm}

Any input data matching no candidate source pattern is left unchanged and flagged for additional review, which could involve replacing values with NULL or default values or manually overriding values.

Since the goal here is simply to quickly prune those patterns that are not good source patterns, the checking process should be able to find unqualified source patterns with {\em high precision} but not necessarily {\em high recall}. Here, we use a simple heuristic of {\em frequency count} that can effectively reject unqualified source patterns with high confidence: examining if there are sufficient base tokens of each class in the source pattern matching the base tokens in the target tokens. The intuition is that any source pattern with fewer base tokens than the target is unlikely to be transformable into the target pattern without external knowledge; base tokens usually carry semantic meanings and hence are likely to be hard to invent \emph{de novo}.

To apply frequency count on the source pattern $p_1$ and the target pattern $p_2$, $\texttt{validate}$ (denoted as $\mathcal{V}$) compares the \textit{token frequency} for every class of base tokens in $p_1$ and $p_2$. The token frequency $\mathcal{Q}$ of a token class $\langle \mathfrak{\widetilde{t}} \rangle$ in $p$ is defined as 
\begin{align}
\mathcal{Q}(\langle \mathfrak{\widetilde{t}} \rangle,p) = \sum_{i=1}^{n} \{t_{i}.\mathfrak{q} | t.name = \langle\mathfrak{\widetilde{t}}\rangle \}, p = \{t_1, \dots, t_n\} 
\end{align}
If a quantifier is not a natural number but ``+'', we treat it as 1 in computing $\mathcal{Q}$.

Suppose $\mathfrak{T}$ is the set of all token classes (in our case, $\mathfrak{T} = [\langle D \rangle, \langle L \rangle, \langle U \rangle, \langle A \rangle, \langle AN \rangle]$), $\mathcal{V}$ is then defined as
\begin{align}
\mathcal{V}(p_1, p_2) = \left\{ \begin{array}{rcl}
\text{true} & \mbox{if } \mathcal{Q}(\langle \mathfrak{\widetilde{t}}\rangle, p_1) \geq \mathcal{Q}(\langle \mathfrak{\widetilde{t}} \rangle, p_2), \forall \langle \mathfrak{\widetilde{t}} \rangle \in \mathfrak{T}\\
\text{false} & \mbox{otherwise}  
\end{array}\right.
\end{align}

\begin{example}
Suppose the target pattern $\mathcal{T}$ in Example~\ref{ex:missingRightBracket} is [`[', $\langle U\rangle$+, `-', $\langle D\rangle$+, `]'], we know 
\begin{align*}
\mathcal{Q}(\langle D\rangle, \mathcal{T}) = 
\mathcal{Q}(\langle U\rangle, \mathcal{T}) = 1
\end{align*} 
A pattern [`[', $\langle U\rangle$3, `-', $\langle D\rangle$5] derived from data record ``[CPT-00350'' will be identified as a source candidate by $\texttt{validate}$, because 
\begin{align*}
\mathcal{Q}(\langle D\rangle, p) &= 5 > \mathcal{Q}(\langle D\rangle, \mathcal{T})\ \wedge \\
\mathcal{Q}(\langle U\rangle, p) &= 3 > \mathcal{Q}(\langle U\rangle, \mathcal{T})
\end{align*}
Another pattern [`[', $\langle U\rangle$3, `-'] derived from data record ``[CPT-'' will be rejected because 
\begin{align*}
\mathcal{Q}(\langle D\rangle, p) = 0 < \mathcal{Q}(\langle D\rangle, \mathcal{T})
\end{align*}
\end{example}

\subsection{Token Alignment}\label{subsubsec:tokenAlign}

\newlength{\textfloatsepsave} 
\setlength{\textfloatsepsave}{\textfloatsep}
\setlength{\textfloatsep}{3pt}

\begin{algorithm}[t]
\SetKwData{src}{$\mathcal{P}_{cand}$}
\SetKwData{pattern}{$\mathcal{P}$}
\SetKwData{target}{$\mathcal{T}$}
\SetKwData{map}{$\mathcal{M}$}
\SetKwData{nodes}{$\widetilde{\eta}$}
\SetKwData{sNode}{$\eta^{s}$}
\SetKwData{tNode}{$\eta^{t}$}
\SetKwData{edges}{$\xi$}
\SetKwData{dag}{$\mathcal{G}$}

 \KwData{Target pattern \target\ = $\{t_1,\dots,t_m\}$, candidate source pattern \src\ = $\{t'_1,\dots,t'_n\}$, where $t_i$ and $t'_i$ denote base tokens}
 \KwResult{Directed acyclic graph \dag}
 \nodes $\leftarrow \{0, \dots, n\}$; \sNode $\leftarrow 0$; \tNode $\leftarrow n$; \edges $\leftarrow \{\}$\;
\For{$t_i \in \target$}{\label{algo:tokenMap:start}

\For{$t'_j \in \src$}{
\If{$\texttt{SyntacticallySimilar}(t_i, t'_j) = \top$ }{\label{algo:tokenMap:1:start}
$e \leftarrow \textsf{Extract}(t'_j)$\;
add $e$ to $\edges_{(i-1, i)}$\; \label{algo:tokenMap:1:end}
}
}
\If{$t_i.type = $ `literal'}{\label{algo:tokenMap:2:start}
$e \leftarrow \textsf{ConstStr}(t_i.name)$\;
add $e$ to $\edges_{(i-1, i)}$\;\label{algo:tokenMap:2:end}
}
}

\For{$i \in \{1, \dots, n - 1\}$}{\label{algo:tokenMap:3:start}
$\xi_{in} \leftarrow \{\forall e_p \in \xi_{(i - 1, i)}, e_p $ is an \textsf{Extract} operation\}\;\label{algo:tokenMap:3:1:start}
$\xi_{out} \leftarrow \{\forall e_q \in \xi_{(i, i + 1)}, e_q $ is an \textsf{Extract} operation\}\;
\For{$e_p \in \xi_{in}$}{
\For{$e_q \in \xi_{out}$}{
\If{$e_p.srcIdx + 1 = e_q.srcIdx$}{
$e \leftarrow$ \textsf{Extract}$(e_p.t_i, e_q.t_j)$\;
add $e$ to $\edges_{(i-1, i+1)}$\;\label{algo:tokenMap:3:end}
}
}
}

}

\dag $\leftarrow Dag(\nodes, \sNode, \tNode, \edges)$\;
\textbf{Return} \dag

 \caption{Token Alignment Algorithm}
 \label{algo:tokenMap}
 
\end{algorithm}

Once a source pattern is identified as a source candidate in Section~\ref{subsubsec:cand}, we need to synthesize an atomic transformation plan between this source pattern and the target pattern, which explains how to obtain the target pattern using the source pattern. To do this, we need to find the token matches for each token in the target pattern: discover all possible operations that yield a token. This process is called {\em token alignment}.




For each token in the target pattern, there might be multiple different token matches. Inspired by \cite{Gulwani2011}, we store the results of the token alignment in Directed Acyclic Graph (DAG) represented as a $DAG(\widetilde{\eta}, \eta^{s}, \eta^{t}, \xi)$ . $\widetilde{\eta}$ denotes all the nodes in DAG with $\eta^{s}$ as the source node and $\eta^{t}$ as the target node. Each node corresponds to a position in the pattern. $\xi$ are the edges between the nodes in $\widetilde{\eta}$ storing the source information, which yield the token(s) between the starting node and the ending node of the edge. Our proposed solution to token alignment in a DAG is presented in Algorithm~\ref{algo:tokenMap}. 

\minisection{Align Individual Tokens to Sources}
To discover sources, given the target pattern $\mathcal{T}$ and the candidate source pattern $\mathcal{P}_{cand}$,  we iterate through each token $t_{i}$ in $\mathcal{T}$ and compare $t_i$ with all the tokens in $\mathcal{P}_{cand}$. 

For any source token $t'_j$ in $\mathcal{P}_{cand}$ that is \textit{syntactically similar} (defined in Definition~\ref{defn:similar}) to the target token $t_{i}$ in $\mathcal{T}$, we create a token match between $t'_j$ and $t_{i}$ with an \textsf{Extract} operation on an edge from $t_{i-1}$ to $t_{i}$  (line~\ref{algo:tokenMap:start}-\ref{algo:tokenMap:2:end}).

\begin{defn}[Syntactically Similar]\label{defn:similar}
Two tokens $t_i$ and $t_j$ are syntactically similar if:
\begin{enumerate*}[label=\arabic*)]
\item they have the same class,
\item their quantifiers are identical natural numbers or one of them is `+' and the other is a natural number
\end{enumerate*}.
\end{defn}

When $t_{i}$ is a literal token, it is either a symbolic character or a constant value. To build such a token, we can simply use a \textsf{ConstStr} operation (line~\ref{algo:tokenMap:2:start}-\ref{algo:tokenMap:2:end}), instead of extracting it from the source pattern. This does not violate our previous assumption of not introducing any external knowledge during the transformation. 

\begin{example}
Let the candidate source pattern be [ $\langle D\rangle$3, `.', $\langle D\rangle$3, `.', $\langle D\rangle$4] and the target pattern be [`(', $\langle D\rangle$3, `)', ` ', $\langle D\rangle$3, `-', $\langle D\rangle$4]. Token alignment result for the source pattern $\mathcal{P}_{cand}$ and the target pattern $\mathcal{T}$, generated by Algorithm~\ref{algo:tokenMap} is shown in Figure~\ref{fig:tokenmapping}. In Figure~\ref{fig:tokenmapping}, a dashed line is a token match, indicating the token(s) in the source pattern that can formulate a token in the target pattern. A solid line embeds the actual operation in \lang\ rendering this token match. 
\end{example}

\minisection{Combine Sequential Extracts}
The \textsf{Extract} operator in our proposed language \lang\ is designed to extract one or more tokens sequentially from the source pattern. 
Line~\ref{algo:tokenMap:1:start}-\ref{algo:tokenMap:2:end} only discovers sources composed of an \textsf{Extract} operation generating an individual token. \textit{Sequential extracts} (\textsf{Extract} operations extracting multiple consecutive tokens from the source) are not discovered, and this token alignment solution is not complete. We need to find the \textit{sequential extracts}. 

Fortunately, discovering sequential extracts is not independent of the previous token alignment process; sequential extracts are combinations of individual extracts. With the alignment results $\xi$ generated previously, we iterate each state and combine every pair of \textsf{Extract}s on an incoming edge and an outgoing edge that extract two consecutive tokens in the source pattern (line~\ref{algo:tokenMap:3:start}-\ref{algo:tokenMap:3:end}). The \textsf{Extract}s are then added back to $\xi$. Figure~\ref{fig:combine} visualizes combining two sequential \textsf{Extracts}. The first half of the figure (titled ``Before Combining'') shows a transformation plan that generates a target pattern pattern $\langle U\rangle \langle D\rangle +$ with two operations--- \textsf{Extract}(1) and \textsf{Extract}(2). The second half of the figure (titled ``After Combining'') showcases merging the incoming edge and the outgoing edge (representing the previous two operations) and formulate a new operation (red arrow), \textsf{Extract}(1,2), as a combined operation of the two.

A benefit of discovering sequential extracts is it helps yield a ``simple'' program, as described in Section~\ref{subsec:rank}.



\minisection{Correctness}
Algorithm~\ref{algo:tokenMap} is \textit{sound} and \textit{complete}, which is proved in Appendix~\ref{app:correctness}.

\subsection{Program Synthesis using Token Alignment Result}\label{subsec:rank}
As we represent all token matches for a source pattern as a DAG (Algorithm~\ref{algo:tokenMap}), finding a transformation plan is to find a path from the initial state $0$ to the final state $l$, where $l$ is the length of the target pattern $\mathcal{T}$. 

The Breadth First Traversal algorithm can find all possible atomic transformation plans for this DAG. However, not all of these plans are equally likely to be correct and desired by the end user. The hope is to prioritize the correct plan. The Occam's razor principle suggests that the simplest explanation is usually correct. Here, we apply \textbf{Minimum Description Length} (MDL)~\cite{rissanen1978modeling}, a formalization of Occam's razor principle, to gauge the \textit{simplicity} of each possible program.

Suppose $\mathcal{M}$ is the set of models. In this case, it is the set of atomic transformation plans found given the source pattern $\mathcal{P}_{cand}$ and the target pattern $\mathcal{T}$. $\mathcal{E} = f_{1}f_{2}\dots f_{n} \in \mathcal{M}$ is an atomic transformation plan, where $f$ is a string expression. Inspired by~\cite{raman2001potter}, we define \textit{Description length} (DL) as follows:
\begin{align}
L(\mathcal{E}, \mathcal{T}) &= L(\mathcal{E}) + L(\mathcal{T}|\mathcal{E})
\end{align}

$L(\mathcal{E})$ is the \textit{model description length}, which is the length required to encode the model, and in this case, $\mathcal{E}$. Hence,
\begin{align}
L(\mathcal{E})  = |\mathcal{E}| \log m 
\end{align}
where $m$ is the number of distinct types of operations.

$L(\mathcal{T}|\mathcal{E})$ is the \textit{data description length}, which is the sum of the length required to encode $\mathcal{T}$ using the atomic transformation plan $\mathcal{E}$. Thus,
\begin{align}
L(\mathcal{T}|\mathcal{E}) = \sum_{f_i \in \mathcal{E}} \log{L(f_{i})} 
\end{align}
where $L(f_i)$ the length to encode the parameters for a single expression. For a \textsf{Extract(i)} or \textsf{Extract(i,j)} operation, $L(f) = \log |\mathcal{P}_{cand}|^{2}$ (recall \textsf{Extract(i)} is short for \textsf{Extract(i,i)}). For a \textsf{ConstStr}($\widetilde{s}$), $L(f) = \log c^{|\widetilde{s}|}$, where $c$ is the size of printable character set ($c = 95$).

With the concept of description length described, we define the minimum description length as 
\begin{align}
L_{min}(\mathcal{T},\mathcal{M}) = \min_{\mathcal{E} \in \mathcal{M}} \Big[L(\mathcal{E}) + L(\mathcal{T}|\mathcal{E})\Big]
\end{align}

In the end, we present the atomic transformation plan $\mathcal{E}$ with the minimum description length as the default transformation plan for the source pattern. Also, we list the other $k$ transformation plans with lowest description lengths.

\begin{example}
Suppose the source pattern is ``$\langle D \rangle2 / \langle D \rangle2 / \langle D \rangle4$  '', the target pattern $\mathcal{T}$ is ``$\langle D \rangle2/\langle D \rangle2$''. The description length of a transformation plan $\mathcal{E}_1=$  Concat(Extract(1,3)) is $L(\mathcal{E}_1, \mathcal{T}) = 1\log1 + 2\log3$. In comparison, the description length of another transformation plan $\mathcal{E}_2=$  Concat(Extract(1), ConstStr(`/'),Extract(3)) is $L(\mathcal{E}_2, \mathcal{T}) = 3\log2 + \log3^{2} + \log95 + \log3^{2} > L(\mathcal{E}_1, \mathcal{T})$. Hence, we prefer $\mathcal{E}_1$, a clearly simpler and better plan than $\mathcal{E}_2$.
\end{example}

\subsection{Limitations and Program Repair}

The target pattern $\mathcal{T}$ as the sole user input so far is  more ambiguous compared to input-output example pairs used in most other PBE systems. Also, we currently do not support ``semantic transformation''. We may face the issue of ``semantic ambiguity''---mismatching syntactically similar tokens with different semantic meanings. For example, if the goals is to transform a date of pattern ``DD/MM/YYYY'' into the pattern ''MM-DD-YYYY'' (our clustering algorithm works in this case). Our token alignment algorithm may create a match from ``DD'' in the first pattern to ``MM'' in the second pattern because they have the same pattern of $\langle D\rangle2$. The atomic transformation plan we \textit{initially} select for each source pattern can be a transformation that mistakenly converts ``DD/MM/YYYY'' into ``DD-MM-YYYY''. Although our algorithm described in Section~\ref{subsec:rank} often makes good guesses about the right matches, the system still infers an imperfect transformation about 50\% of the time (Appendix~\ref{app:userEffortDetails}). 

Fortunately, as our token alignment algorithm is complete and the program synthesis algorithm can discover all possible transformations and rank them in a smart way, the user can quickly find the correct transformation through {\em program repair}: replace the initial atomic transformation plan with another atomic transformation plans among the ones Section~\ref{subsec:rank} suggests for a given source pattern. 

To make the repair even simpler for the user, we deduplicate equivalent atomic transformation plans defined below before the repair phase.
\begin{defn}[Equivalent Plans]\label{defn:equivalence}
Two Transformation Plans are {\em equivalent} if, given the same source pattern, they always yield the same transformation result for any matching string.
\end{defn}
For instance, suppose the source pattern is [$\langle D \rangle 2$, `/', $\langle D \rangle 2$], if transformation plans $\mathcal{E}_1$ is [\textsf{Extract}(3), \textsf{Const}(`/'), \textsf{Extract}(1)] and transformation plans $\mathcal{E}_2$ is [\textsf{Extract}(3), \textsf{Extract}(2), \textsf{Extract}(1)], their final transformation result should be exactly the same and the only difference between  $\mathcal{E}_1$ and  $\mathcal{E}_2$ is the source of `/'. Presenting such equivalent transformations to the user will not be helpful but increase the user effort. Hence, we only pick the simplest plan in the same equivalence class and prune the rest. The methodology detecting the equivalent plans is elaborated in Appendix~\ref{app:equivalentplans}.

Overall, the repair process does not significantly increase the user effort. In those cases where the initial program is imperfect, 75\% of the time the user made just a single repair (Appendix~\ref{app:userEffortDetails}).

\section{Experiments}\label{sec:evaluation}
\definecolor{bblue}{HTML}{4F81BD}
\definecolor{rred}{HTML}{C0504D}
\definecolor{ggreen}{HTML}{9BBB59}
\definecolor{ppurple}{HTML}{9F4C7C}

\begin{figure*}[t]
\centering

\begin{subfigure}[t]{0.3\linewidth}
\centering
\begin{tikzpicture}

\begin{axis}[
width  = 0.95*\linewidth,
height = 3cm,
major x tick style = transparent,
ybar=2*\pgflinewidth,
bar width=6pt,
ymajorgrids = true,
ylabel = {Time(s)},
symbolic x coords={10(2), 100(4), 300(6)},
xtick = data,
ytick={0,100,...,500},
ymax = 550,
scaled y ticks = false,
enlarge x limits=0.25,
ymin=0,
legend cell align=left,
legend style={
legend columns=2,
at={(1.15,.9)},
anchor=south east,
column sep=1ex,
draw=none,
font=\small,
fill=none
}
]
\addplot[style={ggreen,fill=ggreen,mark=none,postaction={
        pattern=crosshatch dots
    }}]
coordinates {(10(2),82) (100(4),328) (300(6),389)};

\addplot[style={bblue,fill=bblue,mark=none,postaction={
        pattern=horizontal lines
    }}]
coordinates {(10(2),49) (100(4),168) (300(6),496)};

\addplot[style={rred,fill=rred,mark=none}]
coordinates {(10(2),38) (100(4),79) (300(6),85)};

\legend{\regexreplace, \baseline, \system}
\end{axis}
\end{tikzpicture}

\vspace{-0.2cm}
\caption{Overall completion time}
\label{fig:scalability}
\vspace{-0.2cm}
\end{subfigure}~\hspace{0.5cm}
\begin{subfigure}[t]{0.3\linewidth}
\centering
\begin{tikzpicture}
\begin{axis}[
width  = 0.95*\linewidth,
height = 3cm,
major x tick style = transparent,
ybar=2*\pgflinewidth,
bar width=6pt,
ymajorgrids = true,
ylabel = {Interaction \#},
symbolic x coords={10(2), 100(4), 300(6)},
xtick = data,
scaled y ticks = false,
enlarge x limits=0.25,
ymin=0,
legend cell align=left,
legend style={
legend columns=2,
at={(1.15,.9)},
anchor=south east,
column sep=1ex,
draw=none,
font=\small,
fill=none
}
]
\addplot[style={ggreen,fill=ggreen,mark=none,postaction={
        pattern=crosshatch dots
    }}]
coordinates {(10(2),1) (100(4),3) (300(6),3)};

\addplot[style={bblue,fill=bblue,mark=none,postaction={
        pattern=horizontal lines
    }}]
coordinates {(10(2),2) (100(4),3) (300(6),8)};

\addplot[style={rred,fill=rred,mark=none}]
coordinates {(10(2),2) (100(4),4) (300(6),5)};

\legend{\regexreplace, \baseline, \system}
\end{axis}
\end{tikzpicture}

\vspace{-0.2cm}
\caption{Rounds of interactions}
\label{fig:scalbilityIterationNumber}
\vspace{-0.2cm}
\end{subfigure}~\hspace{0.5cm}
\begin{subfigure}[t]{0.3\linewidth}
\centering
\begin{tikzpicture}

\begin{axis}[
		width  = 0.95\linewidth,
		height = 3cm,
		ymin = 0,
		ylabel = {time(s)},
		xlabel = {Interaction \#},
		xlabel style={
            at={(0.5,0.24)}
        },
		ytick={0,100,...,500},
		xmajorgrids = true,
		major y tick style = transparent,
		legend cell align=left,
legend style={
legend columns=2,
at={(1.15,.9)},
anchor=south east,
column sep=1ex,
draw=none,
font=\small,
fill=none
}
]
	
	\addplot [mark=star, mark options={fill=white, scale=0.7}, color=ggreen] table [y=trifacta, x=id, col sep=comma] {data/flashfill-scalability-timeline.csv};
	\addplot [mark=triangle*, mark options={fill=white, scale=0.7}, color=bblue] table [y=flashfill, x=id, col sep=comma] {data/flashfill-scalability-timeline.csv};
	\addplot [mark=*, mark options={fill=white, scale=0.7}, color=rred] table [y=clax, x=id, col sep=comma] {data/flashfill-scalability-timeline.csv};
	
	\legend{\regexreplace, \baseline, \system}
	\end{axis}
\end{tikzpicture}
\vspace{-0.6cm}
\caption{Interaction timestamps for 300(6)}
\label{fig:timeline}
\vspace{-0.2cm}
\end{subfigure}

\caption{Scalability of the system usability as data volume and heterogeneity increases (shorter bars are better)}
\end{figure*}
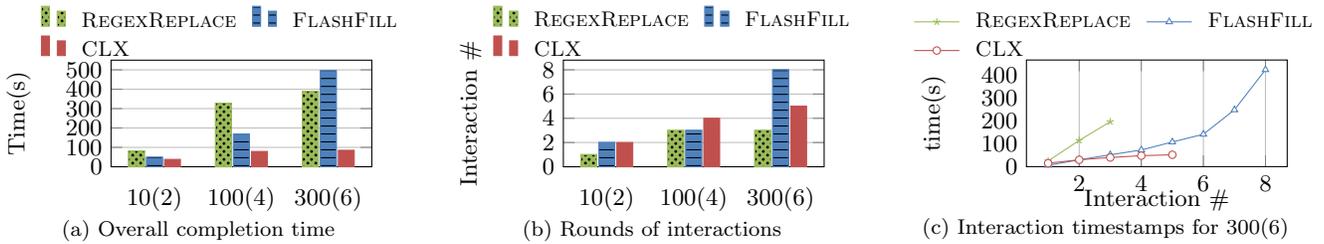
\definecolor{bblue}{HTML}{4F81BD}
\definecolor{rred}{HTML}{C0504D}
\definecolor{ggreen}{HTML}{9BBB59}
\definecolor{ppurple}{HTML}{9F4C7C}

\begin{figure*}[t]
\centering
\begin{minipage}{0.3\linewidth}
\begin{tikzpicture}

\begin{axis}[
width  = 0.95*\linewidth,
height = 3cm,
major x tick style = transparent,
ybar=2*\pgflinewidth,
bar width=6pt,
ymajorgrids = true,
ylabel = {Time(s)},
symbolic x coords={10(2), 100(4), 300(6)},
xtick = data,
ytick={0,100,...,500},
ymax = 550,
scaled y ticks = false,
enlarge x limits=0.25,
ymin=0,
legend cell align=left,
legend style={
legend columns=2,
at={(1.15,.9)},
anchor=south east,
column sep=1ex,
draw=none,
font=\small,
fill=none
}
]
\addplot[style={ggreen,fill=ggreen,mark=none,postaction={
        pattern=crosshatch dots
    }}]
coordinates {(10(2),33) (100(4),140) (300(6),266)};

\addplot[style={bblue,fill=bblue,mark=none,postaction={
        pattern=horizontal lines
    }}]
coordinates {(10(2),34) (100(4),142) (300(6),422)};

\addplot[style={rred,fill=rred,mark=none}]
coordinates {(10(2),36) (100(4),73) (300(6),81)};

\legend{\regexreplace, \baseline, \system}
\end{axis}
\end{tikzpicture}

\caption{Verification time (shorter bars are better)}
\label{fig:scalePause}
\end{minipage}~\hspace{0.5cm}
\begin{minipage}{0.3\linewidth}
\centering
\begin{tikzpicture}

\begin{axis}[
width  = 0.95*\linewidth,
height = 3cm,
major x tick style = transparent,
ybar=2*\pgflinewidth,
bar width=6pt,
ymajorgrids = true,
ylabel = {Correct rate},
symbolic x coords={task 1, task 2, task 3},
xtick = data,
ytick={0,0.2,...,1.0},
scaled y ticks = false,
enlarge x limits=0.25,
ymin=0,
legend cell align=left,
legend style={
legend columns=2,
at={(1.15,.9)},
anchor=south east,
column sep=1ex,
draw=none,
font=\small,
fill=none
}
]

\addplot[style={ggreen,fill=ggreen,mark=none,postaction={
        pattern=crosshatch dots
    }}]
coordinates {(task 1,0.89) (task 2,1) (task 3,0.89)};

\addplot[style={bblue,fill=bblue,mark=none,postaction={
        pattern=horizontal lines
    }}]
coordinates {(task 1,0.33) (task 2,0.67) (task 3,0.44)};

\addplot[style={rred,fill=rred,mark=none}]
coordinates {(task 1,1) (task 2,1) (task 3,0.89)};

\legend{\regexreplace , \baseline, \system}
\end{axis}
\end{tikzpicture}

\caption{User comprehension test (taller bars are better)}
\label{fig:comprehension}
\end{minipage}~\hspace{0.5cm}
\begin{minipage}{0.3\linewidth}
\centering
\begin{tikzpicture}
\begin{axis}[
width  = 0.95*\linewidth,
height = 3cm,
major x tick style = transparent,
ybar=2*\pgflinewidth,
bar width=6pt,
ymajorgrids = true,
ylabel = {Time(s)},
symbolic x coords={task 1, task 2, task 3},
xtick = data,
scaled y ticks = false,
enlarge x limits=0.25,
ymin=0,
legend cell align=left,
legend style={
legend columns=2,
at={(1.15,.9)},
anchor=south east,
column sep=1ex,
draw=none,
font=\small,
fill=none
}
]

\addplot[style={ggreen,fill=ggreen, postaction={
        pattern=crosshatch dots
    }, mark=none}]
coordinates {(task 1,97.67) (task 2,331.33) (task 3,311)};

\addplot[style={bblue,fill=bblue,postaction={
        pattern=horizontal lines
    },mark=none}]
coordinates {(task 1,115.33) (task 2,94.33) (task 3,191.33)};

\addplot[style={rred,fill=rred,mark=none}]
coordinates {(task 1,31.7) (task 2,132.33) (task 3,76)};

\legend{\regexreplace, \baseline, \system}
\end{axis}
\end{tikzpicture}
\caption{Completion time (shorter bars are better)}
\label{fig:userstudy}
\vspace{-0.2cm}
\end{minipage}
\end{figure*}

We make three broad sets of experimental claims. First, we show that as the input data becomes larger and messier, \system\ tends to be less work to use than \baseline\ because verification is less challenging (Section~\ref{subsec:scale}). Second, we show that \system\ programs are easier for users to understand than \baseline\ programs (Section~\ref{subsec:explanationEval}). Third, we show that \system's expressive power is similar to that of baseline systems, as is the required effort for non-verification portions of the PBE process (Section~\ref{subsec:expressiveness}). 



\subsection{Experimental Setup}
We implemented a prototype of \system\ and compared it against the state-of-the-art PBE system \baseline. For ease of explanation, in this section, we refer this prototype as ``\system''. Additionally, to make the experimental study more complete, we had a third baseline approach, a non-PBE feature offered by \trifactawrangler\footnote{\trifactawrangler\ is a commercial product of \wrangler\ launched by Trifacta Inc. The version we used is 3.2.1} allowing the user to perform string transformation through manually creating \textsf{Replace} operations with simple natural-language-like regexps (referred as \regexreplace). All experiments were performed on a 4-core Intel Core i7 2.8G CPU with 16GB RAM. Other related PBE systems, \foofah~\cite{Jin2017} and \tde~\cite{he2018transform}, target different workloads and also share the same verification problem we claim for PBE systems, and hence, are not considered as baselines.

\subsection{User Study on Verification Effort}\label{subsec:scale}

In this section, we conduct a user study on a real-world data set to show that
\begin{enumerate*}[(1)]
\item verification is a laborious and time-consuming step for users when using the classic PBE data transformation tool (\eg, \baseline) particularly on a large messy data set,
\item asking end users to hand-write regexp-based data transformation programs is challenging and inefficient, and 
\item the \system\ model we propose effectively saves the user effort in verification during data transformation and hence its interaction time does not grow fast as the size and the heterogeneity of the data increase.
\end{enumerate*}

\minisection{Test Data Set}
Finding public data sets with messy formats suitable for our experiments is very challenging. The first experiment uses a column of 331 messy phone numbers from the ``Times Square Food \& Beverage Locations'' data set~\cite{TimesSquareDataSet}. 

\minisection{Overview}
The task was to transform all phone numbers into the form ``$\langle D\rangle$3-$\langle D\rangle$3-$\langle D\rangle$4''. We created three test cases by randomly sampling the data set with the following data sizes and heterogeneity: ``10(2)'' has 10 data records and 2 patterns; ``100(4)'' has 100 data records and 4 patterns; ``300(6)'' has 300 data records and 6 patterns. 

We invited 9 students in Computer
Science with a basic understanding of regular expressions and not involved in our project. Before the study, we educated all participants on how to use the system. Then, each participant was asked to work on one test case on a system and we recorded their performance.

We looked into the user performances on three systems from various perspectives: {\em overall completion time}, {\em number of interactions}, and {\em verification time}. The \textit{overall completion time} gave us a quick idea of how much the cost of user effort was affected when the input data was increasingly large and heterogeneous in this data transformation task. The other two metrics allowed us to check the user effort in verification. While measuring completion time is straightforward, the other two metrics need to be clarified.

\informalsection{Number of interactions}
For \baseline, the number of interactions is essentially the number of examples the user provides. For \system\, we define the number of interactions as the number of times the user verifies (and repairs, if necessary) the inferred atomic transformation plans.  We also add one for the initial labeling interaction. For \regexreplace, the number of interactions is the number of \textsf{Replace} operations the user creates.

\informalsection{Verification Time}
All three systems follow different interaction paradigms. However, we can divide the interaction process of into two parts, \textit{verification} and {\em specification}: the user is either busy inputting (typing keyboards, selecting, etc.) or paused to verify the correctness of the transformed data or synthesized/hand-written regular expressions.

Measuring verification time is meaningful because we hypothesize that PBE data transformation systems become harder to use when data is large and messy not because the user has to provide a lot more input, but it becomes harder to verify the transformed data at the instance level.

\minisection{Results}
As shown in Figure~\ref{fig:scalability}, ``100(4)'' cost $1.1\times$ more time than ``10(2)'' on \system, and ``300(6)'' cost $1.2\times$ more time than ``10(2)'' on \system. As for \baseline, ``100(4)'' cost $2.4\times$ more time than ``10(2)'', and ``300(6)'' cost  $9.1\times$ more time than ``10(2)''. Thus, in this user study, the user effort required by \system\ grew slower than that of \baseline.  Also, \regexreplace\ cost significantly more user effort than \system\ but its cost grew not as quickly as \baseline.  This shows good evidence that 
\begin{enumerate*}[(1)]
\item manually writing data transformation script is cumbersome, 
\item the user interaction time grows very fast in \baseline\ when data size and heterogeneity increase, and 
\item the user interaction time in \system\ also grows, but not as fast.
\end{enumerate*}

Now, we dive deeper into understanding the causes for observation (2) and (3). Figure~\ref{fig:scalbilityIterationNumber} shows the number of interactions in all test cases on all systems. We see that all three systems required a similar number of interactions in the first two test cases. Although \baseline\ required 3 more interactions than \system\ in case ``300(6)'', this could hardly be the main reason why \baseline\ cost almost $5x$ more time than \system.

We take a close look at the three systems’ interactions in the case of ``300(6)'' and plot the timestamps of each interaction in Figure~\ref{fig:timeline}. The result shows that, in \baseline, as the user was getting close to achieving a perfect transformation, it took the user an increasingly longer amount of time to make an interaction with the system, whereas the interaction time intervals were relatively stable in \system\ and \regexreplace. Obviously, the user spent a longer time in each interaction NOT because an example became harder to type in (phone numbers have relatively similar lengths). We observed that, without any help from \baseline, the user had to eyeball the entire data set to identify the data records that were still not correctly transformed, and it became harder and harder to do so simply because there were fewer of them. Figure~\ref{fig:scalePause} presents the average verification time on all systems in each test case. ``100(4)'' cost $1.0\times$ more verification time than ``10(2)'' on \system, and ``300(6)'' cost $1.3\times$ more verification time than ``10(2)'' on \system. As for \baseline, ``100(4)'' cost $3.4\times$ more verification time than ``10(2)'', and ``300(6)'' cost  $11.4\times$ more verification time than ``10(2)''. The fact that the verification time on \baseline\ also grew significantly as the data became larger and messier supports our analysis and claim.

To summarize, this user study presents evidence that \baseline\ becomes much harder to use as the data becomes larger and messier mainly because verification is more challenging. In contrast, \system\ users generally are not affected by this issue.

\begin{table}[t]
\centering
\fontsize{8}{8}\selectfont
\begin{tabular}{c|cccc}
\toprule
Task ID & Size & AvgLen & MaxLen & DataType\\
\midrule
Task1&10&11.8&14&Human name\\
Task2&10&20.3&38&Address\\
Task3&100&16.6&18& Phone number\\
\bottomrule
\end{tabular}
\caption{Explainability test cases details}
\label{tab:comprehensibilityData}
\end{table}

\subsection{User Study on Explainability}\label{subsec:explanationEval}

Through a new user study with the same 9 participants on three tasks, we demonstrate that
\begin{enumerate*}[(1)]
\item \baseline\ users lack understanding about the inferred transformation logic, and hence, have inadequate insights on how the logic will work, and show that
\item the simple program generated by \system\ improves the user's understanding of the inferred transformation logic.
\end{enumerate*}

Additionally, we also compared the overall completion time of three systems. 

\minisection{Test Set}
Since it was impractical to give a user too many data pattern transformation tasks to solve, we had to limit this user study to just a few tasks. To make a fair user study, we chose tasks with various data types that cost relatively same user effort on all three systems. 
From the benchmark test set we will introduce in Section~\ref{subsec:expressiveness}, we randomly chose 3 test cases that each is supposed to require same user effort on both \system\ and \baseline: Example 11 from FlashFill (task 1), Example 3 from PredProg (task 2) and ``phone-10-long'' from SyGus (task 3). Statistics (number of rows, average/max/min string length of the raw data) about the three data sets are shown in Table~\ref{tab:comprehensibilityData}. 

\minisection{Overview}
We designed 3 multiple choice questions for every task examining how well the user understood the transformation regardless of the system he/she interacted with. All the questions were formulated as ``Given the input string as $x$, what is the expected output''. All questions are shown in Appendix~\ref{app:questions}.

During the user study, we asked every participant to participate all three tasks, each on a different system (completion time was measured). Upon completion, each participant was asked to answer all questions based on the transformation results or the synthetic programs generated by the system. 

\minisection{Explainability Results}
The correct rates for all 3 tasks using all systems are presented in Figure~\ref{fig:comprehension}. The result shows that the participants were able to answer these questions almost perfectly using \system, but struggled to get even half correct using \baseline. \regexreplace\ also achieved a success rate similar to \system, but required higher user effort and expertise. 

The result suggests that \baseline\ users have insufficient understanding about the inferred transformation logic and \system\ improves the users' understanding in all tasks, which provides evidence that verification in \system\ can be easier.

\minisection{Overall Completion Time}
The average completion time for each task using all three systems is presented in Figure~\ref{fig:userstudy}. Compared to \baseline, the participants using \system\ spent 30\% less time on average: $\sim70\%$ less time on task 1 and  $\sim60\%$ less time on task 3, but $\sim40\%$ more time on task 2. Task 1 and task 3 have similar heterogeneity but task 3 (100 records) is bigger than task 1 (10 records). The participants using \baseline\ typically spent much more time on understanding the data formats at the beginning and verifying the transformation result in solving task 3. This provides more evidence that \system\ saves the verification effort. Task 2 is small (10 data records) but heterogeneous. Both \baseline\ and \system\ made imperfect transformation logic synthesis, and the participants had to make several corrections or repairs. We believe \system\ lost in this case simply because the data set is too small, and as a result, \system\ was not able to exploit its advantage in saving user effort on large-scale data set.
The study also gives evidence that \system\ is sometimes effective in saving user verification effort in small-scale data transformation tasks.

\begin{table}[t]
\centering
\setlength{\tabcolsep}{2pt}
\fontsize{7}{7}\selectfont
\begin{tabularx}{\linewidth}{lllllX}
\toprule
Sources & \# tests & AvgSize&AvgLen & MaxLen & DataType\\
\midrule
SyGus~\cite{SyGuS}&27&63.3&11.8&63& car model ids, human name, phone number, university name and address  \\
FlashFill~\cite{Gulwani2011}&10&10.3&15.8&57& log entry, phone number, human name, date, name and position, file directory, url, product name  \\
BlinkFill~\cite{Singh2016}&4&10.8&14.9&37& city name and country, human name, product id, address \\
PredProg~\cite{singh2015predicting}&3&10.0&12.7&38& human name, address \\
Prose~\cite{prose}&3&39.3&10.2&44& country and number, email, human name and affiliation\\
\midrule
Overall&47&43.6&13.0&63&\\
\bottomrule
\end{tabularx}
\caption{Benchmark test cases details}
\label{tab:benchmarkSize}
\end{table}

\subsection{Expressivity and Efficiency Tests}\label{subsec:expressiveness}
In a simulation test using a large benchmark test set, we demonstrate that
\begin{enumerate*}[(1)]
\item the expressive power of \system\ is comparable to the other two baseline systems \baseline\ and \regexreplace, and
\item \system\ is also pretty efficient in costing user interaction effort.
\end{enumerate*}

\minisection{Test Set}
We created a benchmark of 47 data pattern transformation test cases using a mixture of public string transformation test sets and example tasks from related research publications (will be released upon the acceptance of the paper). The information about the number of test cases from each source, average raw input data size (number of rows), average/max data instance length, and data types of these test cases are shown in Table~\ref{tab:benchmarkSize}.  A detailed description of the benchmark test set is shown in Appendix~\ref{app:benchmarks}.

\minisection{Overview}
We evaluated \system\ against 47 benchmark tests. As conducting an actual user study on all 47 benchmarks is not feasible, we simulated a user following the ``lazy approach'' used by Gulwani \ea~\cite{harris2011spreadsheet}: a simulated user selected a target pattern or multiple target patterns and then repaired the atomic transformation plan for each source pattern if the system proposed answer was imperfect. 

Also, we tested the other two systems against the same benchmark test suite. As with \system, we simulated a user on \baseline; this user provided the first positive example on the first data record in a non-standard pattern, and then iteratively provided positive examples for the data record on which the synthetic string transformation program failed. On \regexreplace, the simulated user specified a \textsf{Replace} operation with two regular expressions indicating the matching string pattern and the transformed pattern, and iteratively specified new parameterized \textsf{Replace} operations for the next ill-formatted data record until all data were in the correct format.

\minisection{Evaluation Metrics}
In experiments, we measured how much user effort all three systems required. Because systems follow different interaction models, a direct comparison of the user effort is impossible. We quantify the user effort by {\em Step}, which is defined differently as follows

\begin{itemize}
    \item For \system, the total Steps is the sum of the number of correct patterns the user chooses (Selection) and the number of repairs for the source patterns whose default atomic transformation plans are incorrect (Repair). In the end, we also check if the system has synthesized a ``perfect'' program: a program that successfully transforms all data.
    \item For \baseline, the total Steps is  the sum of the number of input examples to provide and the number of data records that the system fails to transform.
    \item For \regexreplace, each specified \textsf{Replace} operation is counted as 2 Steps as the user needs to type two regular expressions for each \textsf{Replace}, which is about twice the effort of giving an example in \baseline. 
\end{itemize}
In each test, for any system, if not all data records were correctly transformed, we added the number of data records that the system fails to transform correctly to its total \textit{Step} value as a punishment.
In this way, we had a coarse estimation of the user effort in all three systems on the 47 benchmarks.

\begin{table}[t]
\centering
\fontsize{8}{8}\selectfont
\begin{tabular}{c|ccc}
\toprule
Baselines& \system\ Wins&Tie& \system\ Loses\\
\midrule
vs.\ \baseline& 17 (36\%)&17 (36\%) &13 (28\%)\\
vs.\ \regexreplace& 33 (70\%)& 12 (26\%)& 2 (4\%)\\
\bottomrule
\end{tabular}
\caption{User effort simulation comparison. }
\label{tab:winorlose}
\end{table}

\minisection{Expressivity Results}
\system\ could synthesize right transformations for 42/47 ($\sim 90\%$) test cases, whereas \baseline\ reached 45/47 ($\sim 96\%$). This suggests that the expressive power of \system\ is comparable to that of \baseline. 

There were five test cases where \system\ failed to yield a perfect transformation. Only one of the failures was due to the expressiveness of the language itself, the others could be fixed if there were more representative examples in the raw data. ``Example 13'' in FlashFill requires the inference of advanced conditionals (\textsf{Contains} keyword ``picture'') that \lang\ cannot currently express, but adding support for these conditionals in \lang\ is straightforward. The failures in the remaining four test cases were mainly caused by the lack of the target pattern examples in the data set. For example, one of the test cases we failed is a name transformation task, where there is a last name ``McMillan'' to extract. However, all data in the target pattern contained last names comprising one uppercase letter followed by multiple lowercase letters and hence our system did not realize ``McMillan'' needed to be extracted. We think if the input data is large and representative enough, we should be able to successfully capture all desired data patterns.

\regexreplace\ allows the user to specify any regular expression replace operations, hence it was able to correctly transform all the input data existed in the test set, because the user could directly write operations replacing the exact string of an individual data record into its desired form. However, similar to \lang, \regexreplace\ is also limited by the expressive power of regular expressions and cannot support advanced conditionals. As such, it covered 46/47 ($\sim98\%$) test cases.

\minisection{User Effort Results}
As the \textit{Step} metric is a potentially noisy measure of user effort, it is more reasonable to check whether \system\ costs more or less effort than other baselines, rather than to compare absolute \textit{Step} numbers. The aggregated result is shown in Table~\ref{tab:winorlose}. It suggests \system\ often requires less or at least equal user effort than both PBE systems. Compared to \regexreplace, \system\ almost always costs less or equal user effort. A detailed discussion about the user effort on \system\ and comparison with other systems is in Appendix~\ref{app:userEffortDetails}.

\section{Related Work}\label{sec:related}
\minisection{Data Transformation}
\baseline\ (now a feature in Excel) is an influential work for syntactic transformation by Gulwani~\cite{Gulwani2011}. It designed an expressive string transformation language and proposed the algorithm based on version space algebra to discover a program in the designed language. It was recently integrated to PROSE SDK released by Microsoft. A more recent PBE project, \tde~\cite{he2018transform}, also targets string transformation. Similar to \baseline, \tde\ requires the user to verify at the instance level and the generated program is unexplainable to the user. Other related PBE data cleaning projects include \cite{Singh2016,Jin2017}. 

Another thread of seminal research including \cite{raman2001potter}, \wrangler~\cite{Kandel2011} and \trifacta\ created by Hellerstein \etal\ follow a different interaction paradigm called ``predictive interaction''. They proposed an inference-enhanced visual platform supporting many different data wrangling and profiling tasks. Based on the user selection of columns, rows or text, the system intelligently suggests possible data transformation operations, such as  \textsf{Split}, \textsf{Fold}, or pattern-based extraction operations.


\minisection{Pattern Profiling}
In our project, we focus on clustering ad hoc string data based on structures and derive the structure information. The \learnpads~\cite{Fisher2008} project is somewhat related. It presents a learning algorithm using statistics over symbols and tokenized data chunks to discover pattern structure. \learnpads\ assumes that all data entries follow a repeating high-level pattern structure. However, this assumption may not hold for some of the workload elements. In contrast, we create a bottom-up pattern discovery algorithm that does not make this assumption. Plus, the output of \learnpads\ (\ie, PADS program~\cite{Fisher2005}) is hard for a human to read, whereas our pattern cluster hierarchy is simpler to understand. Most recently, \datamaran\cite{Gao2017NavigatingTD} has proposed methodologies for discovering structure information in a data set whose record boundaries are unknown, but for the same reasons as \learnpads, \datamaran\ is not suitable for our problem.

\minisection{Program Synthesis} Program synthesis has garnered wide interest in domains where the end users might not have good programming skills or programs are hard to maintain or reuse including data science and database systems. Researchers have built various program synthesis applications to generate SQL queries~\cite{wang2017synthesizing, qian2012sample, li2014constructing}, regular expressions~\cite{blackwell2001swyn,li2008regular}, data cleaning programs~\cite{Gulwani2011,wu2015iterative}, and more.



Researchers have proposed various techniques for program synthesis. \cite{gulwani2011synthesis, jha2010oracle} proposed a constraint-based program
synthesis technique using logic solvers. However, constraint-based techniques are mainly applicable in the context where finding a satisfying solution is challenging, but we prefer a high-quality program rather than a satisfying program. Version space algebra is another important technique that is applied by \cite{mitchell1982generalization,lau2003programming,Gulwani2011,lau2000version}. \cite{devlin2017robustfill} recently focuses on using deep learning for program synthesis. Most of these projects rely on user inputs to reduce the search space until a quality program can be discovered; they share the hope that there is one simple solution matching most, if not all, user-provided example pairs. In our case, transformation plans for different heterogeneous patterns can be quite distinct. Thus, applying the version space algebra technique is difficult.

\section{Conclusion and Future Work}\label{sec:conclusion}
Data transformation is a difficult human-intensive task. PBE is a leading approach of using computational inference to reduce human burden in data transformation. However, we observe that standard PBE for data transformation is still difficult to use due to its laborious and unreliable verification process.

We proposed a new data transformation paradigm \system\ to alleviate the above issue. In \system, we build data patterns to help the user quickly identify both well-formatted and ill-formatted data which immediately saves the verification time. \system\ also infers regexp replace operations as the desired transformation, which many users are familiar with and boosts their confidence in verification.

We presented an instantiation of \system\ with a focus on data pattern transformation including
\begin{enumerate*}[(1)]
\item a pattern profiling algorithm that hierarchically clusters both the raw input data and the transformed data based on data patterns,
\item a DSL, \lang, that can express many data pattern transformation tasks and can be interpreted as a set of simple regular expression replace operations,
\item algorithms inferring a correct \lang\ program.
\end{enumerate*}

We presented two user studies. In a user study on data sets of various sizes, when the data size grew by a factor of 30, the user verification time required by \system\ grew by $1.3\times$ whereas that required by \baseline\ grew by $11.4\times$. The comprehensibility user study shows the \system\ users achieved a success rate about twice that of the \baseline\ users. The results provide good evidence that \system\ greatly alleviates the verification issue.

Although building a highly-expressive data pattern transformation tool is not the central goal of this paper, we are happy to see that the expressive power and user effort efficiency of our initial design of \system\ is comparable to those of \baseline\ in a simulation study on a large test set in another test.

\system\ is a data transformation paradigm that can be used not only for data pattern transformation but other data transformation or transformation tasks too. For example, given a set of heterogeneous spreadsheet tables storing the same information from different organizations, \system\ can be used to synthesize programs converting all tables into the same standard format. Building such an instantiation of \system\ will be our future work.

\bibliographystyle{plain}
\bibliography{paper}

\begin{appendix}
\section{Correctness of Token Alignment Algorithm}\label{app:correctness}
\begin{theorem} [Soundness]
If the token alignment algorithm (Algorithm~\ref{algo:tokenMap}) successfully discovers a token correspondence, it can be transformed into a \lang\ program.
\end{theorem}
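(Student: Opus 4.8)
The plan is to read ``successfully discovers a token correspondence'' as the existence of a complete path in the output DAG $\mathcal{G}$ from the source node $\eta^{s}$ to the target node $\eta^{t}$, and then to show by structural inspection that any such path induces a well-formed \lang\ program matching the grammar in Figure~\ref{fig:program}. The argument decomposes into three levels: (i) every edge of $\mathcal{G}$ carries a valid \lang\ string expression $f$; (ii) any $\eta^{s}$-to-$\eta^{t}$ path concatenates these into a valid expression $\mathcal{E}$ that reproduces the whole target pattern $\mathcal{T}$; and (iii) pairing $\mathcal{E}$ with a \textsf{Match} predicate and wrapping it in a \textsf{Switch} yields a complete program $\mathcal{L}$.

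First I would establish edge-level validity. By inspection of Algorithm~\ref{algo:tokenMap}, every edge added to $\xi$ arises in exactly one of three places: the syntactic-similarity branch emits an \textsf{Extract}$(t'_j)$ over a single source token, the literal branch emits a \textsf{ConstStr}$(t_i.\mathit{name})$, and the combine branch emits \textsf{Extract}$(e_p.t_i, e_q.t_j)$. Each of these is a production of the String Expression nonterminal $f := \textsf{ConstStr}(\widetilde{s}) \mid \textsf{Extract}(\mathfrak{\widetilde{t}}_i, \mathfrak{\widetilde{t}}_j)$. For the combine branch the guard $e_p.\mathit{srcIdx} + 1 = e_q.\mathit{srcIdx}$ forces the two underlying source tokens to be consecutive, so the combined \textsf{Extract} denotes a contiguous slice of $\mathcal{P}_{cand}$, exactly the semantics of \textsf{Extract}$(i,j)$. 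Hence every edge is a syntactically and semantically legal $f$.

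Next I would argue path-level validity and coverage. Edges are only ever deposited on $\xi_{(i-1,i)}$ or $\xi_{(i-1,i+1)}$, so each edge runs from a node $a$ to a strictly larger node $b$ and is responsible for producing target tokens $t_{a+1},\dots,t_b$; in particular $\mathcal{G}$ is acyclic and forward-oriented. A complete path $\eta^{s} = v_0 < v_1 < \dots < v_k = \eta^{t}$ therefore partitions the target positions $1,\dots,|\mathcal{T}|$ into contiguous, non-overlapping, gap-free segments, and reading off the edge operations in path order gives $\mathcal{E} = \textsf{Concat}(f_1,\dots,f_k)$. This matches the Expression production, and by the segment tiling it evaluates to precisely $\mathcal{T}$; thus $\mathcal{E}$ is a legal atomic transformation plan in the sense of Definition~\ref{def:atp}. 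Taking $b = \textsf{Match}(s, \mathcal{P}_{cand})$ as the predicate, the pair $(b, \mathcal{E})$ is a valid \textsf{Switch} clause, and $\textsf{Switch}((b,\mathcal{E}))$ is a complete \lang\ program, which establishes soundness.

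The main obstacle I anticipate is pinning down ``token correspondence'' precisely enough that ``covers all of $\mathcal{T}$'' is guaranteed rather than assumed: I must confirm that a correspondence is required to reach $\eta^{t}$ (not merely to be some path fragment in $\mathcal{G}$) so that the segment tiling genuinely exhausts $1,\dots,|\mathcal{T}|$ with no target token left unproduced, and I must verify from the edge-placement rules that no edge skips a node in a way that leaves a gap between produced segments. These checks are routine once the forward-orientation and the $(a,b)$-to-$\{t_{a+1},\dots,t_b\}$ correspondence are made explicit, but they are the crux that separates a mere collection of legal edges from a legal program.
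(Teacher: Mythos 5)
Your proposal is correct, and its first layer --- the case analysis over the three edge-creation sites of Algorithm~\ref{algo:tokenMap}, showing each emitted edge is a legal \textsf{ConstStr} or \textsf{Extract} string expression --- is essentially the \emph{entirety} of the paper's own proof. The paper reads ``token correspondence'' at the level of a single discovered match: it observes that every such match corresponds to a \textsf{ConstStr} or an \textsf{Extract} producing one or several target tokens, and then simply asserts that such a correspondence ``can be admitted into an atomic transformation plan, which will end up becoming part of a \lang\ program.'' It never carries out your second and third layers: the forward-orientation/tiling argument that a complete $\eta^{s}$-to-$\eta^{t}$ path partitions the target positions into contiguous, gap-free segments and hence yields a \textsf{Concat} evaluating exactly to $\mathcal{T}$, nor the wrapping of that expression with \textsf{Match} inside a \textsf{Switch} to obtain a full program $\mathcal{L}$. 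In other words, you prove a strictly stronger statement (complete source-to-target correspondence $\Rightarrow$ complete well-formed program) where the paper proves only the per-edge claim and gestures at the rest; your reading also resolves the ambiguity you flag --- what ``a token correspondence'' means --- in the more useful direction, since under the per-edge reading the theorem is nearly vacuous. Both arguments rest on the same structural inspection of the algorithm, so nothing in your proposal conflicts with the paper's; yours is the completed version of the paper's sketch, at the cost of having to commit to a precise definition of ``correspondence'' that the paper leaves informal.
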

\begin{proof}
Recall that an atomic transformation plan for a pair of source pattern and target pattern is a concatenation of \textsf{Extract} or \textsf{ConstStr} operations that sequentially generates each token in the target pattern. Every token correspondence discovered in Algorithm~\ref{algo:tokenMap} corresponds to either a \textsf{ConstStr} operation or a \textsf{Extract}, both of which will generate one or several tokens in the target pattern. Hence, a token correspondence is can be possibly admitted into an atomic transformation plan, which will end up becoming part of a \lang\ program. The soundness is true.
\end{proof}
\begin{theorem} [Completeness]
If there exists a \lang\ program, the token alignment algorithm (Algorithm~\ref{algo:tokenMap}) will for sure discover the corresponding token correspondence matching the program.
\end{theorem}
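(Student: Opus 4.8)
The plan is to take an arbitrary valid \lang\ program and exhibit, inside the DAG $\mathcal{G}$ built by Algorithm~\ref{algo:tokenMap}, a source-to-target path whose edges are labeled by exactly the operations of that program. Concretely, let $\mathcal{E} = f_1 f_2 \cdots f_k$ be any atomic transformation plan (Definition~\ref{def:atp}) that turns the candidate source pattern $\mathcal{P}_{cand}$ into the target pattern $\mathcal{T} = \{t_1, \dots, t_m\}$. First I would normalize $\mathcal{E}$, without changing its output, so that every \textsf{ConstStr} emits a single literal token. Then, because $\mathcal{E}$ is a \textsf{Concat} of operations that sequentially generate the tokens of $\mathcal{T}$ from left to right, the outputs of $f_1, \dots, f_k$ partition $\mathcal{T}$ into consecutive blocks $B_1, \dots, B_k$; writing $B_\ell = t_{v_{\ell-1}+1} \cdots t_{v_\ell}$ with $v_0 = 0$ and $v_k = m$ yields a candidate path $v_0 \to v_1 \to \cdots \to v_k$ through the DAG nodes. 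It then suffices to show that for every $\ell$ the edge $(v_{\ell-1}, v_\ell)$ carrying the label $f_\ell$ is actually added to $\xi$.

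The bulk of the argument is a case analysis on $f_\ell$. If $f_\ell = \textsf{ConstStr}(\widetilde{s})$, then after normalization its block is a single literal token $t_{v_\ell}$ with $v_\ell = v_{\ell-1}+1$, and the literal branch of Algorithm~\ref{algo:tokenMap} adds precisely this edge. If $f_\ell = \textsf{Extract}(i,i)$ extracts a single source token $t'_i$, then since extracting reproduces $t'_i$ and the result must coincide with $t_{v_\ell}$, the two tokens must share a class and have compatible quantifiers, i.e.\ they are \emph{syntactically similar} in the sense of Definition~\ref{defn:similar}; hence the \texttt{SyntacticallySimilar} test succeeds and $\textsf{Extract}(t'_i)$ is placed on the unit edge $(v_{\ell-1}, v_\ell)$. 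The remaining, and hardest, case is a \emph{sequential} extract $f_\ell = \textsf{Extract}(i,j)$ with $j > i$, which copies the consecutive source run $t'_i \cdots t'_j$ into the block $B_\ell$ of length $j-i+1 = v_\ell - v_{\ell-1}$.

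For the sequential case I would argue by induction on the block length. The base case is the single-token extract already handled above: each constituent pair $(t'_{i+r}, t_{v_{\ell-1}+1+r})$ is syntactically similar, so every unit edge inside $B_\ell$ is present in $\xi$ after the first double loop. For the inductive step I would invoke the combining phase: the edge spanning $t'_i \cdots t'_{j-1}$ (inductively present) is an incoming \textsf{Extract} whose last source index is $j-1$, and the unit edge for $t'_j$ is an outgoing \textsf{Extract} whose source index is $j$; since $(j-1)+1 = j$, the condition $e_p.\mathit{srcIdx} + 1 = e_q.\mathit{srcIdx}$ fires and $\textsf{Extract}(i,j)$ is added on $(v_{\ell-1}, v_\ell)$. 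I expect this combining argument to be the main obstacle, because it must be shown that iterating the merge step recovers sequential extracts of \emph{every} length, and that consecutiveness of the target positions inside $B_\ell$ forces consecutiveness of the matched source indices; pinning down the latter invariant is what makes the induction go through. Once all $k$ edges are confirmed present, the path $v_0 \to \cdots \to v_k$ exists in $\mathcal{G}$ with edge labels $f_1, \dots, f_k$, so the token correspondence matching $\mathcal{E}$ is discovered, which is exactly the claimed completeness.
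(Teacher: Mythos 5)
Your proposal follows essentially the same route as the paper's proof: both reduce completeness to showing that every \textsf{ConstStr} edge, every single-token \textsf{Extract} edge (via Definition~\ref{defn:similar}), and every multi-token sequential \textsf{Extract} appears in $\xi$, and both establish the last case by the same induction on extract length, repeatedly firing the $e_p.srcIdx + 1 = e_q.srcIdx$ merge in the combining loop to build \textsf{Extract}$(p,q)$ from \textsf{Extract}$(p,q-1)$ and \textsf{Extract}$(q)$. Your explicit block decomposition of $\mathcal{T}$ into a source-to-target path is just a slightly more formal packaging of the paper's claim that the discovered operations ``combined will generate all tokens for the target pattern,'' so the two arguments coincide in substance.
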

\begin{proof}
Given the definition of the \lang\ and candidate source patterns, the completeness is true only when the token alignment algorithm can discover all possible parameterized \textsf{Extract} and/or \textsf{ConstStr} operations which combined will generate all tokens for the target pattern. In Algorithm~\ref{algo:tokenMap}, line~\ref{algo:tokenMap:1:start}-\ref{algo:tokenMap:1:end} is certain to discover any \textsf{Extract} operation that extracts a single token in the source pattern and produces a single token in a target pattern; line~\ref{algo:tokenMap:2:start}-\ref{algo:tokenMap:2:end} guarantees to discover any \textsf{ConstStr} operation that yields a single constant token in a target pattern. Given the design of our pattern profiling, an \textsf{Extract} of a single source token can not produce multiple target tokens, because such multiple target tokens, if exist, must have the same token class, and should be merged as one token whose quantifier is the sum of the all these tokens. Similarly, the reverse is also true. What remains to prove is whether Algorithm~\ref{algo:tokenMap} is guaranteed to generate an \textsf{Extract} of multiple tokens, \ie, \textsf{Extract}$(p,q) (p < q)$, in the source pattern that produces multiple tokens in the target pattern. In Algorithm~\ref{algo:tokenMap}, line~\ref{algo:tokenMap:1:start}-\ref{algo:tokenMap:1:end} is guaranteed to discover \textsf{Extract}$(p)$, \textsf{Extract}$(p+1)$, $\dots$, \textsf{Extract}$(q)$. With these \textsf{Extracts}, when performing line~\ref{algo:tokenMap:3:1:start}-\ref{algo:tokenMap:3:end} when $i = p+1$ in Algorithm~\ref{algo:tokenMap}, it will discover the incoming edge representing \textsf{Extract}$(p)$ and the output edge representing \textsf{Extract}$(p+1)$ and combine them, generating \textsf{Extract}$(p, p+1)$. When $i = p + 2$, it will discover the incoming edge representing \textsf{Extract}$(p, p+1)$ and the outgoing edge representing \textsf{Extract}$(p+2)$ and combine them, generating \textsf{Extract}$(p, p+2)$. If we repeat this process, we will definitely find \textsf{Extract}$(p,q)$ in the end. Therefore, the solution is complete.
\end{proof}


\section{Equivalent Plans Detection}\label{app:equivalentplans}
To ``deduplicate'' a list of candidate transformation plans $P_1, P_2, \dots, P_q$, we first pairwise compare the first plan $P_1$ with the rest of the plans, and if $P_1$ has any equivalent plans, we only keep the one that is the simplest (see Section~\ref{subsec:rank}), and remove the rest. We then repeat the previous process until we remove the duplicates for all plans in the list. The computational complexity of the above process is $\mathcal{O}(q^2)$, where $q$ is the number of plans in the list. In practice, $q$ is usually a small number. Hence, the above deduplication process is reasonably inexpensive.

Checking whether a candidate transformation plan $P_1$ is equivalent to another candidate transformation plan $P_2$ is performed through the following procedures:
\begin{enumerate}
\itemsep0em 
\item Split each \textsf{Extract}$(m,n)$ operation in both plans into \textsf{Extract}$(m)$,\textsf{Extract}$(m+1)$, $\dots$, \textsf{Extract}$(n)$. 
\item Assuming $P_1 = \{op_1^{1}, op_2^{1}, \dots, op_n^{1}\}$ and $P_2 = \{op_1^{2}, op_2^{2}, \dots, op_m^{2}\}$. If $m \neq n$, we stop checking and return $\mathbf{False}$. Otherwise, from left to right, we compare operations of two plans one by one. For example, we first compare $op_1^{1}$ with $op_1^{2}$, then  $op_2^{1}$ with $op_2^{2}$, and so on. The check continues when 
\begin{enumerate}
\itemsep0em 
\item $op_k^{1}$ is exactly the same as $op_k^{2}$, or
\item $op_k^{1}$ is not same as $op_k^{2}$. However, one of them is an \textsf{Extract} operation and the other is a \textsf{ConstStr} operation, and the first operation extracts a constant string whose content is exactly same as the content of the second operation.
\end{enumerate}.
\item We stop and return $\mathbf{True}$ if we reach the end of both plans.
\end{enumerate}

The computational complexity of above pairwise comparison is clearly linear to the length of the plan, and is therefore inexpensive.

\section{Questions used in Program Explanation Experiment}\label{app:questions}

\begin{enumerate}
\item For task 1, if the input string is ``Barack Obama'', what is the output?
  \begin{choices}
    \choice Obama
    \choice Barack, Obama
    \choice Obama, Barack
    \choice None of the above
  \end{choices}
\item For task 1, if the input string is ``Barack Hussein Obama'', what is the output?
  \begin{choices}
    \choice Obama, Barack Hussein
    \choice Obama, Barack
    \choice Obama, Hussein
    \choice None of the above
  \end{choices}

\item For task 1, if the input string is ``Obama, Barack Hussein'', what is the output?
  \begin{choices}
    \choice Obama, Barack Hussein
    \choice Obama, Barack
    \choice Obama, Hussein
    \choice None of the above
  \end{choices}
  
\item For task 2, if the input is “155 Main St, San Diego, CA 92173”, what is the output
  \begin{choices}
    \choice San
    \choice San Diego
    \choice St, San
    \choice None of the above
  \end{choices}

\item For task 2, if the input string is “14820 NE 36th Street, Redmond, WA 98052”, what is the output?
  \begin{choices}
    \choice Redmond
    \choice WA
    \choice Street, Redmond
    \choice None of the above
  \end{choices}

\item For task 2, if the input is “12 South Michigan Ave, Chicago”, what is the output?
  \begin{choices}
    \choice South Michigan
    \choice Chicago
    \choice Ave, Chicago
    \choice None of the above
  \end{choices}
  
\item For task 3, if the input string is “+1 (844) 332-282”, what is the output?
  \begin{choices}
    \choice +1 (844) 282-332
    \choice +1 (844) 332-282
    \choice +1 (844)332-282 
    \choice None of the above
  \end{choices}

\item For task 3, if the input string is “844.332.282”, what is the output?
  \begin{choices}
    \choice +844 (332)-282
    \choice +844 (332) 332-282
    \choice +1 (844) 332-282
    \choice None of the above
  \end{choices}

\item For task 3, if the input string is “+1 (844) 332-282 ext57”, what is the output?
  \begin{choices}
    \choice +1 (844) 322-282
    \choice +1 (844) 322-282 ext57
    \choice +1 (844) 282-282 ext57 
    \choice None of the above
  \end{choices} 
\end{enumerate}

\section{Creation of Benchmark Test Set}\label{app:benchmarks}

Among the 47 test cases we collected, 27 are from SyGus (Syntax-guided
Synthesis Competition), which is a program synthesis contest held every year. In 2017, SyGus revealed 108 string transformation tasks in its Programming by Examples Track: 27 unique scenarios and 4 tasks of different sizes for each scenario. We collected the task with the longest data set in each scenario and formulated the pattern normalization benchmarks of 27 tasks. We collected 10 tasks from FlashFill~\cite{Gulwani2011}. There are 14 in their paper. Four tests (Example 4, 5, 6, 14) require a loop structure in the transformation program which is not supported in \lang\ and we filter them out. Additionally, we collected 4 tasks from \blinkfill~\cite{Singh2016}, 3 tasks from PredProg~\cite{singh2015predicting}, 3 tasks from Microsoft PROSE SDK~\cite{prose}.

For test scenarios with very little data, we asked a Computer Science student not involved with this project to synthesize more data. Thus, we have sufficient data for evaluation later. Also, the current 
\system\ prototype system requires at least one data record in the target pattern. For any benchmark task, if the input data set violated this assumption, we randomly converted a few data records into the desired format and used these transformed data records and the original input data to formulate the new input data set for the benchmark task.  The heterogeneity of our benchmark tests comes from the input data and their diverse pattern representations in the pattern language described previously in the paper. 
\definecolor{bblue}{HTML}{4F81BD}
\definecolor{rred}{HTML}{C0504D}
\definecolor{ggreen}{HTML}{9BBB59}
\definecolor{ppurple}{HTML}{9F4C7C}

\begin{figure}[!t]
\centering
\begin{subfigure}{\linewidth}
\centering
\begin{tikzpicture}
\begin{axis}[
ybar,
xmin = 1,
xmax = 47,
bar width=1pt,
width  = \linewidth,
height = 2.8cm,
ymin=0,
ymax = 6,
xtick={1,5,...,47},
minor xtick={1,2,...,47},
tickwidth=0,
ytick={1,2,3,4,5},
ymajorgrids=true,
xtick align=outside,
ytick align=outside,
     yticklabel=\pgfmathparse{\tick}\pgfmathprintnumber{\pgfmathresult}$\times$,
]
\addplot [color=rred, fill=rred] table [x=x,  y=Flashfill, col sep=comma] {data/efficiency-times.csv};

\end{axis}
\end{tikzpicture}

\vspace{-0.2cm}
\caption{\system\ vs.\ \baseline}
\label{fig:speedupflashfill}
\end{subfigure}
\begin{subfigure}{\linewidth}
\centering
\begin{tikzpicture}
\begin{axis}[
ybar,
xmin = 1,
xmax = 47,
bar width=1pt,
width  = \linewidth,
height = 2.8cm,
ymin=0,
ymax = 6,
xtick={1,5,...,47},
minor xtick={1,2,...,47},
tickwidth=0,
ytick={1,2,3,4,5},
ymajorgrids=true,
xtick align=outside,
ytick align=outside,
yticklabel=\pgfmathparse{\tick}\pgfmathprintnumber{\pgfmathresult}$\times$,
]
\addplot [mark=none, color=ggreen, fill=ggreen] table [x=x,  y=Trifacta, col sep=comma] {data/efficiency-times.csv};

\end{axis}
\end{tikzpicture}
\caption{\system\ vs.\ \regexreplace}
\label{fig:speeduptrifacta}
\end{subfigure}

\caption{Speedup: \# of \textit{Step}s ratio for 47 test cases}
\label{fig:efficiency}

\end{figure}
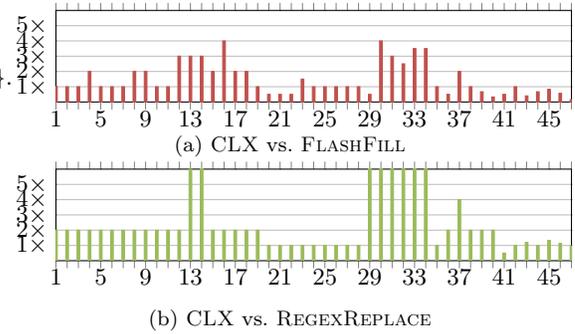

\section{User Effort Detailed Analysis}\label{app:userEffortDetails}

Figure~\ref{fig:efficiency} shows the overall \textit{speedup} of \system\ over the other two baselines. The y-axis is
the value of \textit{speedup}: number of \textit{Steps} cost in \system\ over the other baseline system. The x-axis
denotes the benchmark id. A summarized result showing the percentage of test cases that cost more or less effort on \system\ compared to the other baselines is shown in Table~\ref{tab:winorlose}. It suggests \system\ often requires less or at least equal user effort than both PBE systems. Compared to \regexreplace, \system\ almost always costs less or equal user effort.

Figure~\ref{fig:breakdown} is a breakdown of the user effort required by \system. The y-axis is the number of \textit{Steps}; the x-axis denotes the percentage of test cases that costs less than or equal to the number of \textit{Steps} indicated by y-axis. For around 79\% of the test cases, \system\ is able to infer a perfect data normalization program within two steps. Also, the user needs to select only one target pattern in the initial step for about 79\% of the test cases, which proves that our pattern profiling technique is usually effective in grouping data under the same pattern. 

Additionally, we observe that the user needs to make no adjustment for the suggested transformations in about 50\% of the test cases and $\leq 1$ adjustment in about 85\% of the test cases. This shows that Occam's razor principle we follow and the algorithm we design is effective in prioritizing the correct transformations and producing quality results. 

Note that, in test case ``popl-13.ecr'' from PROSE, \system\ consumed tremendous user effort because the data are a combination of human names, organization names and country names. All these names do not share a distinctive syntax for us to identify. For example, if the user wants to extract both ``INRIA'' and ``Univ. of California'', the user might have to select both ``$\langle U\rangle$+'' and ``$\langle U\rangle\langle L\rangle+.\ \langle L\rangle+\ \langle U\rangle\langle L\rangle+$'' as the target patterns, and adjust more later. This increases the user selection and adjustment effort. 


\definecolor{bblue}{HTML}{4F81BD}
\definecolor{rred}{HTML}{C0504D}
\definecolor{ggreen}{HTML}{9BBB59}
\definecolor{ppurple}{HTML}{9F4C7C}

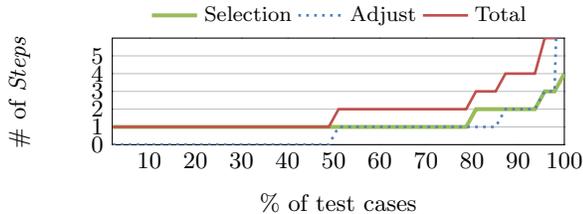
\begin{figure}[!t]
\centering
\begin{tikzpicture}
\begin{axis}[
	width  = 0.9\linewidth,
  ymin=0,
  ymax=6,
  xmin=2,
  xmax=100,
  height=3cm,
    xlabel={\% of test cases},
    ylabel={\# of \textit{Steps}},
    xtick={0,10,...,100},
    ytick={0,1,...,5},
    tickwidth=0,
    ymajorgrids=true,
    xtick align=outside,
    ytick align=outside,
    legend style={anchor=north east, cells={anchor=west}, font=\scriptsize,yshift=0.6cm, xshift=-0.2cm, legend columns=3,draw=none, fill=none,font=\small},
    ]

\addplot [mark=none, color=ggreen,line width=1.5pt] table [x=x, x expr=\thisrowno{0}/47*100, y=selection, col sep=comma]  {data/breakdown.csv};
\addplot [mark=none, dotted, color=bblue,line width=1pt] table [x=x, x expr=\thisrowno{0}/47*100, y=reconfigure, col sep=comma] {data/breakdown.csv};
\addplot [mark=none,color=rred,line width=1pt] table [x=x, x expr=\thisrowno{0}/47*100, y=unifacta, col sep=comma] {data/breakdown.csv};

\legend{Selection, Adjust, Total}
\end{axis}
\end{tikzpicture}
\vspace{-.2cm}
\caption{Percentage of test cases costing $\leq$ Y \textit{Steps} in different \system\ interaction phases}\label{fig:breakdown}
\end{figure}

However, this problem can be easily solved by suggesting a single operation of \textsf{Extract} between two commas for all source patterns once we identify the ``comma'' is a ``StructProphecy''~\footnote{A pattern with ``StructProphency'' tokens is a pattern of $k$ fields separated by $k-1$ struct tokens; in ``popl-13.ecr'' from PROSE, all source data are three name fields separated by two commas, and we want to extract the field in the middle} using the methodology proposed by~\cite{Fisher2008}. In this case, the user effort is substantially reduced.

\end{appendix}

\end{document}